\documentclass{article}


\usepackage[preprint,nonatbib]{nips_2018}




\usepackage[utf8]{inputenc} 
\usepackage[T1]{fontenc}    
\usepackage{hyperref}       
\usepackage{url}            
\usepackage{booktabs}       
\usepackage{wrapfig}
\usepackage{lipsum}
\usepackage{amsfonts}       
\usepackage{nicefrac}
\usepackage[title]{appendix}
\usepackage{wrapfig}
\usepackage{microtype}      
\usepackage{fancyhdr}
\usepackage{algorithm}
\usepackage[noend]{algpseudocode}
\usepackage{comment}
\usepackage{times}
\usepackage{tikz}
\usepackage{amsmath}
\usepackage{multirow}
\usepackage{multirow}
\usepackage{changepage}
\usepackage{amssymb}
\usepackage{graphicx}%
\newtheorem{theorem}{Theorem}

\newtheorem{claim}{Claim}

\newtheorem{corollary}{Corollary}

\newtheorem{lemma}{Lemma}

\newenvironment{proof}[1][Proof]{\textbf{#1.} }{\ \rule{0.5em}{0.5em}}
\PassOptionsToPackage{options}{natbib}
\title{Straggler-Resilient and Communication-Efficient Distributed Iterative Linear Solver}

%

\author{
  Farzin Haddadpour \\
  Pennsylvania State University\\
  \texttt{fxh18@psu.edu}
   \And
  Yaoqing Yang\\
  Carnegie Mellon University \\
  \texttt{yyaoqing@andrew.cmu.ed} \\
     \And
 {Malhar Chaudhari}\\
  Carnegie Mellon University \\
    \texttt{mschaudh@andrew.cmu.edu} \\
       \And
      {Viveck R Cadambe}\\
      Pennsylvania State University \\
        \texttt{viveck@engr.psu.edu} \\
   \And
  {Pulkit Grover} \\
   Carnegie Mellon University \\
  \texttt{pgrover@andrew.cmu.edu} \\
}

\begin{document}

\maketitle

\begin{abstract}
We propose a novel distributed iterative linear inverse solver method. Our method, {\emph{PolyLin}}, has significantly lower communication cost, both in terms of number of rounds as well as number of bits, in comparison with the state of the
art at the cost of higher computational complexity and storage. Our algorithm also has a built-in resilience to straggling and faulty computation nodes. We develop a natural variant of our main algorithm that trades off communication cost for computational complexity. Our method is inspired by ideas in error correcting codes.
\end{abstract}

\section{Introduction}
Over the last decade, owing to the increasing {data} volumes, data processing is commonly done in large distributed systems.  The collective processing capacity of multiple computing nodes operating in parallel is higher, and as a consequence, distributed algorithms are significantly faster and often more memory efficient, as compared to single node computations. However, the performance of such distributed algorithms {does not scale linearly with the number of nodes (beyond a few tens of nodes)} in practice because of two major bottlenecks \cite{qi2016paleo, dean2012large}. First, the overhead of communicating data becomes significant in comparison to the actual computation time. Second, a few excessively slow nodes called \emph{stragglers} often slow down the overall computation. The {goal} of our paper is to develop new distributed algorithms for linear inverse solvers that overcome communication bottlenecks and stragglers.

Our specific focus is on iterative methods for linear inverse problems, where each iteration has the form $\mathbf{x}_{\textrm{new}} = \mathbf{A}\mathbf{x}_{\textrm{old}}+\mathbf{Q}\mathbf{y},$ where $\mathbf{y}$ is the input to the system. This method includes, as its special case, a wide array methods such as the Jacobi iterative method, Gauss Siedel method, power iterations and Pagerank \cite{atkinson2008introduction,heath2002scientific,kang2009pegasus,gleich2004fast}. Such iterative techniques are used extensively for solving linear inverse problems arising in imaging and inference, both for dense and sparse problems, because of their low complexity \cite{zhang2009cuda, atkinson2008introduction, bertero1998introduction}. A naive baseline method to implement such algorithms in a distributed network of processing nodes is to distribute $\mathbf{A}$ among the nodes, and perform the matrix-vector multiplication $\mathbf{A}\mathbf{x}_{\textrm{old}}$ in distributed manner. When performed over a set of $P$ distributed worker nodes, the per-node computational and storage complexities are both $\frac{1}{P}$ of the overall computational and storage complexities of the centralized algorithm. The main {difficulty with such} distributed implementations, however, is that each node only has a local view of vector $\mathbf{x}_{new}$, and messages need to be exchanged in \emph{every iteration} to concatenate/aggregate the outputs of all nodes before proceeding to the next iteration.

\subsection*{Summary of Contributions}
Our main contribution is a new, non-trivial method of parallelizing iterative linear inverse solvers that requires significantly smaller communication overhead as compared with state of the art, both in terms of the number of rounds of communication and the number of bits communicated.  In addition to having smaller communication,  our algorithms have built in resilience to straggling and/or faulty computing nodes. We present our algorithms and results in a master-worker architecture, where there are one master node and $P$ worker nodes. The worker nodes, which operate in parallel, carry the computational burden of the iterative linear inverse solver.

The master node performs any necessary pre-processing of the inputs and sends them to the worker nodes. It also accumulates results of worker nodes, performs any necessary post-processing, and sends the results back to the worker nodes for the next set of iterations if necessary. For instance, in the baseline algorithm explained above, to perform $n$ iterations, $n$ rounds of communication are performed between the master and the worker node.

In Section \ref{sec:parallel-extreme1}, we develop the \emph{PolyLin} algorithm which has only \emph{one} round of communication between the master node and the worker nodes, irrespective of the number of iterations performed. {PolyLin guarantees linear convergence, that is, the error decays exponentially in the number of iterations.} The core idea of PolyLin is the development of a technique that computes the $n$-th power of $\mathbf{A}$ without requiring $n$ rounds of communication, or requiring every node to store $\mathbf{A}$ entirely. The total amount of data exchanged between the master node and each worker node is approximately twice the length of the target vector $\mathbf{x}$. In contrast, to perform $n$ iterations, the baseline scheme requires $n$ rounds of communication, and the amount of data exchanged is $2n$ times the length of the target vector $\mathbf{x}$, i.e., it scales linearly in the number of iterations. {Additionally, our algorithm has the property that it tolerates stragglers or faulty nodes, that is, in a system with $P$ processing nodes, it suffices if any $K$ of the $P$ nodes complete their job per round of communication; the per-node processing complexity and memory depends on $K$.}

Our algorithm incurs two penalties as compared to the baseline algorithm. First, {PolyLin} requires a fraction of {$\frac{1}{K^{2/n}}$} of the overall computational and storage complexity of the centralized algorithm {where $K\leq P$}; note that the baseline algorithm requires a fraction of $\frac{1}{P}$. Second, {PolyLin} incurs a pre-processing computation cost that, in effect, is equivalent to running the centralized algorithm once in each worker node. {Our experiments in Section \ref{experiment}, reveal that, despite the computational complexity overhead of our algorithms, the communication overhead reduction translates to faster completion time.}  The pre-processing depends only on linear system matrices/vectors $\mathbf{A},\mathbf{Q}$, and therefore can be amortized over multiple uses of the algorithm corresponding to different instances of the input $\mathbf{y}$. From an application viewpoint, the pre-processing cost may be worth the gains in communication and computation complexities when the matrices involved $\mathbf{A},\mathbf{Q}$ are fixed or slowly changing with respect to the target vector, e.g., inference and web queries. One specific example is the personalized PageRank problem where web queries may focus on different topics \cite{haveliwala2003topic} but the underlying linear system matrix (the graph adjacency matrix) does not change. Some other examples include supervised $\ell_2$-minimization with multi-instance learning and multi-label learning and solving numerical equations, such as Poisson equations, with the same system matrix but different inputs.

We develop a variant of {PolyLin} in Section \ref{sec:tradeoff} that trades off the number of rounds with respect to the computation and storage complexities. For instance, by choosing parameters correctly, Section \ref{sec:tradeoff} can be used to develop an algorithm that halves the communication cost as compared to the baseline distributed implementation but incurs a computational and storage complexity penalty of a factor of $\sqrt{P}$. {PolyLin} works by storing carefully constructed linear projections of the matrix $\mathbf{A}$ at each worker as a part of pre-processing. The worker nodes run an iterative algorithm based on the stored linear projections and the initialized target vector $\mathbf{x}$, and send the output to the master node. The coefficients for the linear projections are chosen to be evaluations of certain polynomials to ensure that the master node, on receiving the outputs of the worker node, can recover the solution of the iterative linear solver using polynomial interpolation. Robustness is built into the algorithm in a manner that is similar in spirit to Reed Solomon codes \cite{roth2006introduction}, where, the number of polynomial evaluations chosen is higher than the degree of the polynomial so that a few slow or faulty nodes can be ignored in the interpolation.

\subsection*{{Related Works}}

{Our} work relates to a long line of work that we review in three categories: distributed optimization, distributed linear system solvers and coding theoretic ideas for straggler resilient distributed computing.

 \textbf{1) Distributed optimization}: One can consider the linear inverse solver to be equivalent to a linear regression problem of the form \begin{align}\arg\min_x\|\mathbf{M}\mathbf{x}-\mathbf{y}\|^2_2\label{eq:optimization}.\end{align} Therefore, we compare our results with other communication-efficient distributed optimization algorithms\footnote{{As we mostly focus on square linear systems, we compare with distributed optimization in a regime where the number of data points is comparable to the number of features.}}. We have the following two cases:

{\textbf{1.1) One-shot communication schemes:} At one extreme, there are distributed methods that require only a single round communication such as \cite{mcdonald2009efficient,zinkevich2010parallelized,zhang2012communication,mcwilliams2014loco,heinze2016dual}. In these works, the data is distributed to the worker nodes, and {each worker node solves a ``local'' optimization on the part of the input stored at the worker, and the master node averages the results of the workers. As a consequence, the convergence of these one-shot algorithms is not linear; in fact, some of these algorithms can not guarantee convergence rates beyond what could be achieved if we ignore data residing on all but a single computer \cite{shamir2014communication}.} {In contrast, our one-round algorithm achieves linear convergence although with some computational and storage overhead.}

{{\textbf{1.2) Multiple-round communication schemes:}}  {In order to compare our schemes with algorithms in \cite{lee2017distributed,zhang2015disco,shamir2014communication,ma2015adding,nesterov2013introductory}, we focus on the regime where the number of data points is comparable to the number of features.} {We show that for our algorithms, when the number of iterations $n$ satisfies $n\geq \log_{\frac{1}{|\sigma_1|}}\frac{N\underset{1\leq i\leq N}{\max}|\alpha_i|}{\epsilon}$, the error is upper bounded by $\epsilon$, where $N$ is the number of data points, $\sigma_1$ is the second largest eigenvalue of $\mathbf{A}$, and $\alpha_i$ is the projection of an arbitrarily chosen initialization vector on to the $i$-th eigen vector of $\mathbf{A}$.} \cite{lee2017distributed} introduces two algorithms {\emph{distributed stochastic variance reduced gradient} (DSVRG) and \emph{distributed accelerated stochastic variance reduced gradient} (DASVRG)}. These algorithms respectively require $(1+\frac{\kappa P}{N})\log(\frac{1}{\epsilon})$ and $(1+\sqrt{\frac{\kappa P}{N}})\log(1+{\frac{\kappa P}{N}})\log(\frac{1}{\epsilon})$ rounds of communication to find the optimal solution with error $\epsilon$, where  $\kappa$ is the condition number as defined in \cite{lee2017distributed}. The corresponding computational and communication costs are $O(\frac{N^2}{P}+N\kappa)\log(\frac{1}{\epsilon})$ and $O(\frac{N^2}{P}+N\sqrt{\frac{\kappa P}{N}})\log(1+\frac{\kappa P}{N})\log(\frac{1}{\epsilon})$ respectively. Note that for the case that {$\kappa=\Omega({\frac{N}{P}})$}, our baseline algorithm outperforms DSVRG and DASVRG in terms of computational and communication cost. { While \cite{lee2017distributed} provides a lower bound of $\Omega(\sqrt{\frac{P\kappa}{N}})$ on the number of rounds of communication for a certain class of algorithms where the algorithm of our paper  storage cost is fixed to be that of the baseline algorithm\footnote{To the best of our understanding, the lower bound of \cite{lee2017distributed} requires a specific method of sampling and storing the data as well.} (in an order sense), in our paper we present an algorithm with a fewer number rounds although with a higher storage cost.} For instance, by loading higher computation task at each processing node, we can achieve linear convergence using only one round of communication.

 {Additionally, there have been several other algorithms such as Disco \cite{zhang2015disco}, Dane \cite{shamir2014communication}, COCOA$^{+}$ \cite{ma2015adding} and accelerated gradient method \cite{nesterov2013introductory}.} We compare algorithms {in \cite{lee2017distributed,zhang2015disco,shamir2014communication,ma2015adding,nesterov2013introductory,karakus2017straggler,maity2018robust}} in terms of communication and computational costs in Table \ref{table:comp} {briefly} and in more detail in the appendix. Specifically, our results imply that when $N$ is of the same scaling as $1/\epsilon$ and the number of features, our algorithm outperforms the baseline while the baseline is comparable with these algorithms in terms of communication rounds.}

{\textbf{2) Distributed linear inverse solver using network properties:}} {There is a second line of related works in \cite{demmel2008avoiding,demmel2012communication,wicky2017communication,lipshitz2012communication}
which minimizes the communication cost for various network architectures. {In contrast,} our work does not have these specific network structures. Some of these methods build upon the Krylov-subspace methods. In this work, we mainly focus on stationary methods (such as Jacobi and power iterations) which have successive matrix-vector multiplications. For the specific problem of PageRank, the convergence of Krylov methods strongly depends on the properties of the graph and is non-monotonic. Although Krylov methods have gained popularity, the techniques presented in our paper are still relevant for many specific problems and systems where power iterations perform comparably, or better than Krylov subspace method. Power-iteration and Jacobi methods have approximately the same {convergence rate} determined by the \emph{teleport} probability and the most stable convergence pattern \cite{gleich2004fast}. This is an advantage of stationary methods that perform the same amount of work per any iteration.} Extending our ideas to perform multiple iterations of non-stationary algorithms with fewer rounds of communication, without having to distribute all the data to all the nodes, is an interesting area of future work.

 {\textbf{3) Coding Theoretic approaches for straggler-resilient distributed computing:}} Our algorithms are related to and inspired by recent work that uses error correcting codes for protecting distributed linear operations and optimization problems \cite{dutta2016short, MatDot,MatDot_Allerton, raviv2017gradient, tandon2017gradient,raviv2017gradient, yang2017coding,lee2017speeding, mallick2018rateless,charles2017approximate,karakus2017straggler,ye2018communication,yu2017polynomial,yu2018straggler} from faults and stragglers.
{Specifically, the approaches of  \cite{karakus2017straggler, tandon2017gradient, raviv2017gradient, yang2017coding, charles2017approximate, karakus2017straggler, ye2018communication} {can be} interpreted in our context as the introduction of coding methods for adding straggler/fault tolerance to the baseline algorithm. Since these papers essentially include a variant of the baseline algorithm, the communication overheads, measured in terms of number of rounds as well as the number bits, are proportional to the number iterations of the power method. The main contribution of our work to this body of literature is to develop a novel method that can perform multiple iterations of the power method in a single communication round, thereby reducing the overall communication cost (in addition to providing straggler resilience).}

From a technical viewpoint, our core ideas are related to references \cite{yu2017polynomial, MatDot, MatDot_Allerton, yu2018straggler} which use polynomial evaluation based error correcting codes to protect matrix multiplications from faults, stragglers and errors. Our approach particularly builds on \cite{MatDot} which multiplies multiple (more than two) matrices in a straggler resilient manner. The process of adapting the ideas of \cite{MatDot} to the power method however requires the development of new ancilliary results; the relevent ideas of \cite{MatDot} as well as some new related results are described in Sec. \ref{sec:Propertiesofmatrixpoly}.


\section{Background}
In this section, we provide some preliminary background on linear inverse solvers.

\textbf{Preliminaries on Solving Linear Systems using Iterative Methods:}
Consider the linear inverse problem of finding {an} $N\times 1$ vector $\mathbf{x}$
that satisfies $\mathbf{M}\mathbf{x}=\mathbf{y}$, given a rank  $m$, $L\times N$ matrix $\mathbf{M}$ and a $L\times 1$ vector  $\mathbf{y}$.
 When $\mathbf{M}$ is a square full rank matrix, the closed-form solution is $\mathbf{x}=\mathbf{M}^{-1}\mathbf{y}$.
Two iterative methods, namely the Jacobi and the gradient descent method are used to solve these linear inverse problems: \emph{ Jacobian Method for Square System}
  and \emph{Gradient Descent Method} (see \cite{saad2003iterative} and Appendix \ref{a}).

We can cast both iterative methods into the same formulation as
 \begin{align}
 \mathbf{x}^{(n+1)}=\mathbf{A}\mathbf{x}^{(n)}+\mathbf{Q}\mathbf{y}\label{eq:General-iterative}
 \end{align}
 for two appropriate matrices $\mathbf{A}$ and $\mathbf{Q}$. Denote by $\mathbf{x}^*$ the
 fixed point of  (\ref{eq:General-iterative}), we have
 $\mathbf{x^{*}}=\mathbf{A}\mathbf{x^{*}}+\mathbf{Q}\mathbf{y}$.
 Then, using (\ref{eq:General-iterative}) and defining $\mathbf{e}^{(n)}=\mathbf{x}^{(n)}-\mathbf{x}^{*}$, we have $
 \mathbf{e}^{(n)}=\mathbf{A}\mathbf{e}^{(n-1)}=\mathbf{A}^n\mathbf{e}^{(0)}$. Throughout this paper, we assume the absolute values of the eigenvalues of matrix $\mathbf{A}$ are strictly less than $1$, $\lim_{n\longrightarrow\infty}\mathbf{A}^{n}=0$,  which further implies $\lim_{n\longrightarrow\infty}\mathbf{e}^{(n)}=0$. {This condition} guarantees convergence of the iterative method (\ref{eq:General-iterative}).

\textbf{Bound on error:} We mention a lower bound on the error of the iterative method  (\ref{eq:General-iterative}) as a function of the number of iterations $n$. For simplicity, we assume $\mathbf{A}$ is diagonalizable and full rank\footnote{The case when $\mathbf{A}$ is not diagonalizable can also be analyzed with the Jordan decomposition.}.
\begin{lemma}[\textbf{Bound on error}]\label{lem:nupbound}
If the absolute values of the eigenvalues of $\mathbf{A}$ are strictly less than 1 and the number of iterations satisfies $n\geq \log_{\frac{1}{|\sigma_1|}}\frac{N\underset{1\leq i\leq N}{\max}|\alpha_i|}{\epsilon}$, then $\|\mathbf{e}^{(n)}\|\leq \epsilon$.
\end{lemma}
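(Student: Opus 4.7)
The plan is to diagonalize $\mathbf{A}$ and then exploit the fact that powers of a diagonalizable matrix act by raising its eigenvalues to the corresponding power. Since $\mathbf{A}$ is assumed diagonalizable and full rank, I can write $\mathbf{A} = \mathbf{V}\mathbf{\Sigma}\mathbf{V}^{-1}$, where $\mathbf{\Sigma} = \operatorname{diag}(\sigma_1,\ldots,\sigma_N)$ with the eigenvalues ordered so that $|\sigma_1| \geq |\sigma_2| \geq \cdots \geq |\sigma_N|$ and the columns of $\mathbf{V}$ are unit eigenvectors $\mathbf{v}_1,\ldots,\mathbf{v}_N$. Expanding the initial error in this basis gives $\mathbf{e}^{(0)} = \sum_{i=1}^{N}\alpha_i \mathbf{v}_i$, so the $\alpha_i$'s referenced in the statement are simply these eigenbasis coefficients.

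Next, I would use the identity $\mathbf{e}^{(n)} = \mathbf{A}^n \mathbf{e}^{(0)}$ established in the background section. Plugging in the eigendecomposition yields
\begin{equation*}
\mathbf{e}^{(n)} \;=\; \mathbf{V}\mathbf{\Sigma}^n\mathbf{V}^{-1}\sum_{i=1}^{N}\alpha_i \mathbf{v}_i \;=\; \sum_{i=1}^{N}\alpha_i\,\sigma_i^{\,n}\,\mathbf{v}_i.
\end{equation*}
Applying the triangle inequality and the fact that each $\|\mathbf{v}_i\| = 1$, I obtain
\begin{equation*}
\|\mathbf{e}^{(n)}\| \;\leq\; \sum_{i=1}^{N}|\alpha_i|\,|\sigma_i|^{n} \;\leq\; N\cdot \max_{1\le i\le N}|\alpha_i|\cdot |\sigma_1|^{n},
\end{equation*}
where the last bound uses $|\sigma_i|\leq|\sigma_1|$ for all $i$.

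Finally, I would impose $N\max_i|\alpha_i|\cdot|\sigma_1|^n \leq \epsilon$ and solve for $n$. Since $|\sigma_1|<1$ by assumption, taking logarithms in base $1/|\sigma_1|$ (so the inequality flips in the right direction) gives exactly
\begin{equation*}
n \;\geq\; \log_{\frac{1}{|\sigma_1|}}\frac{N\max_{1\le i\le N}|\alpha_i|}{\epsilon},
\end{equation*}
which matches the hypothesis, completing the argument.

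There is no real obstacle here; the only subtle point is being explicit about the normalization of the eigenvectors (or, equivalently, absorbing $\|\mathbf{v}_i\|$ and a conditioning factor of $\mathbf{V}$ into the constant), so that the prefactor $N\max_i|\alpha_i|$ in the statement is justified. If one wants to avoid the unit-norm assumption, the same argument goes through with an extra condition-number factor, which could then be folded into the $\alpha_i$'s via a rescaling of the eigenbasis.
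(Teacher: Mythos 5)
Your proof is correct and follows essentially the same route as the paper's own argument in Appendix \ref{b}: expand $\mathbf{e}^{(0)}$ in the eigenbasis, apply $\mathbf{e}^{(n)}=\mathbf{A}^n\mathbf{e}^{(0)}$ to pick up the factors $\sigma_i^n$, use the triangle inequality to get the bound $N\max_i|\alpha_i|\,|\sigma_1|^n$, and solve for $n$. Your explicit remark about the unit-norm normalization of the eigenvectors is a point the paper glosses over, but otherwise the two arguments coincide.
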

The proof can be found in Appendix \ref{b}. In the sequel, the following equation, which is a consequence of (\ref{eq:General-iterative}), will be useful:
\begin{align}
 \mathbf{x}^{(n)}=\mathbf{A}^{n}\mathbf{x}^{(0)}+(\mathbf{A}^{n-1}+\dots+\mathbf{I})\mathbf{Q}\mathbf{y}\label{eq:equivaldesc}.
\end{align}

Our goal is to implement linear inverse solvers, i.e., solutions to (\ref{eq:General-iterative}) via (\ref{eq:equivaldesc}) in a distributed manner.

\textbf{Notation}: Throughout this paper, we assume that $\mathbf{A}$, $\mathbf{Q}$ and the identity matrix $\mathbf{I}$ are  $N\times N$ matrices, $n$ denotes the number of iterations, and associated error is $\| e^{(n)}\|$, where we denote $\ell_2$ norm of vector $\mathbf{v}$ by $\|\mathbf{v}\|$. Because preserving order of matrices in multiplication is important, we use $\Pi_{i= i_0}^{i_1} \mathbf{M}_i$ to denote $\mathbf{M}_{i_0}\mathbf{M}_{i_0+1}\dots\mathbf{M}_{i_1-1}\mathbf{M}_{i_1} $ when $i_0 < i_1$ and $\mathbf{M}_{i_0} \mathbf{M}_{i_0-1} \dots \mathbf{M}_{i_1+1}\mathbf{M}_{i_1}$ when $i_0 > i_1$.\vspace{-0.5 em}
\section{Setup and BaselineParallel Algorithm}\label{sec:Problemstatement}
\textbf{Setup:} Our setup consists of a master node and $P$ distributed worker/processing nodes. As a part of our algorithm's offline computations, the master node receives as input the matrices $\mathbf{A}, \mathbf{Q} \in \mathbb{R}^{N\times N}$. It does some pre-processing on these inputs and sends some matrices to the worker nodes, which store the received inputs. During online computations, the master node receives a vector $\mathbf{y}$ as input and outputs $\mathbf{x}^{(n)}$ as per (\ref{eq:General-iterative}), through an algorithm that interacts with the worker nodes. We assume that $\mathbf{x}^{(0)}$ is initialized arbitrarily.

We consider algorithms that operate in \emph{rounds}, where each round consists of a communication from the master node to the worker nodes, and communication from the worker nodes back to the master node. In Sections \ref{sec:parallel-extreme1} and \ref{sec:tradeoff}, we focus on algorithms where the master node waits for the fastest $K$ workers to finish before proceeding to the next round. A round can possibly correspond to multiple iterations of (\ref{eq:General-iterative}). We measure the performance of our algorithms as follows:

\emph{Communication cost}: We use a linear model \cite{bruck1997efficient,bertsekas1989parallel,bang2008digraphs} to measure the communication cost. The cost of a round of communication that involves sending of $B_1$ bits from the master node to every worker node, and $B_2$ bits from every worker node to the master node is measured as $\beta_1 + \beta_2(B_1 + B_2)$. Thus in this model, the coefficient against $\beta_1$ represents the number of rounds used by the algorithm, and the coefficient against $\beta_2$ represents the number of bits exchanged between a master node and each worker node.

We compare various algorithms in terms of the \emph{complexity of workers} done as a part of online computations and their \emph{storage cost}. We present \emph{pre-processing} complexity due to offline computations performed by the master node before the online computations, as well as the \emph{post-processing} complexity of the master node after gathering the outputs of the workers. For an algorithm where in each round, the master node waits for the fastest $K$ workers, the \emph{straggler tolerance} is measured as $(P-K)$. 

We study the dependence of the above cost metrics in terms of the parameters-$N$, $P$, and $n$, which implicitly reflects the error rate as per Lemma \ref{lem:nupbound}. All the costs are indicated in Table \ref{table:comp} and the calculations can be found in the Appendix \ref{d}.

\begin{figure}
  \begin{center}
    \includegraphics[scale=0.55]{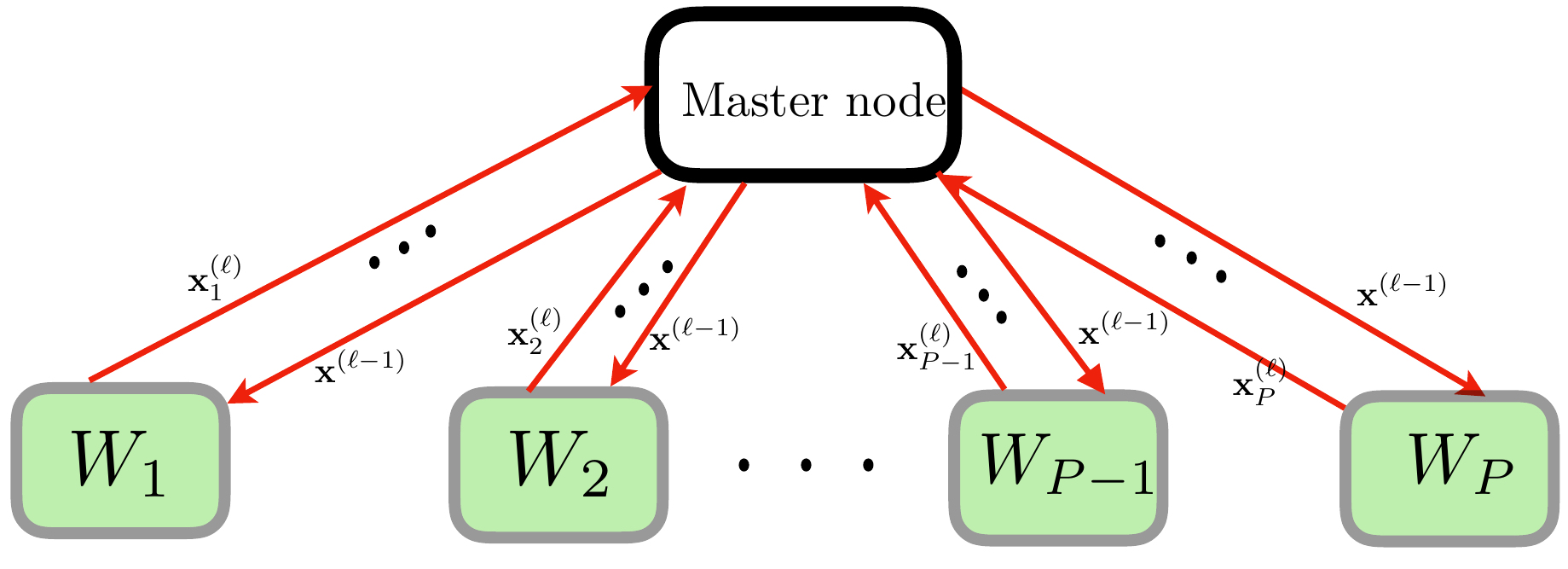}
  \end{center}
\vspace{-0.5 em}
\caption{In BaselineParallel algorithm each worker node communicates independently of others with the master node and the master node waits all of the worker node to finish their task at each communication round.} \label{fig:0}\vspace{-0.5 em}
\end{figure}

\textbf{BaselineParallel:} First, consider a centralized implementation of a linear inverse solver where a single node implements (\ref{eq:General-iterative}). While there is no communication cost for a centralized implementation, the computation complexity is $O(nN^{2})$, and the storage cost is $O(N^2)$. Using $P$ processing nodes, the BaselineParallel algorithm - Algorithm \ref{euclid} - reduces computational complexity as well as storage cost by a factor of $P$.
In the BaselineParallel algorithm, in the off-line preprocessing step, the master node splits matrix $\mathbf{A}$ and $\mathbf{Q}$ equally horizontally as $\mathbf{A} = \begin{bmatrix}
\mathbf{A}_1^T & \ldots & \mathbf{A}^T_P
\end{bmatrix}^T, \mathbf{Q}=\begin{bmatrix}
\mathbf{Q}_1^T & \ldots & \mathbf{Q}_P^T
\end{bmatrix}^T$ and sends $\mathbf{A}_i, \mathbf{Q}_i$ to the $i$-th worker. In the online phase, the master node sends $\mathbf{x}^{(0)}$ and $\mathbf{y}$ to each worker to perform (\ref{eq:General-iterative}). The online phase of the algorithm is performed in $n$ rounds, each round corresponding to one iteration. In the $\ell$-th iteration, the master node sends $\mathbf{x}^{(\ell-1)}$ to all the worker nodes. Worker $i$ computes $\mathbf{A}_i \mathbf{x}^{(\ell-1)} + \mathbf{Q}_i \mathbf{y}$ and then sends it to the master node which aggregates the results of all workers to obtain $\mathbf{x}^{(\ell)}$.
{\small{\begin{algorithm}[h]
\caption{BaselineParallel($\mathbf{A},\mathbf{Q},n$) (Figure \ref{fig:0})
}\label{euclid}
\begin{algorithmic}[1]
\State \textbf{[${A}$] Offline computations}-\textbf{Master node:} Split matrices $\mathbf{A}$ and $\mathbf{Q}$ into $P$ equal-dimension submatrices such that $\mathbf{A}=\begin{bmatrix}
\mathbf{A}_1^T&
\ldots&
\mathbf{A}^T_P
\end{bmatrix}^T, \mathbf{Q}=\begin{bmatrix}
\mathbf{Q}_1^T&
\ldots&
\mathbf{Q}_P^T
\end{bmatrix}^T$. Then, send submatrices $\mathbf{A}_i$ and $\mathbf{Q}_i$,  $\mathbf{x}^{(0)}$ to the $i$th worker for $1\leq i\leq P$.
\State \textbf{[B] Online computations:} Send $\mathbf{y}$ to the $i$th worker.
 \State \textbf{For} ${k=1:n}$ repeat
 \State \quad \textbf{For} worker node $i=1$ to $P$ do
 \State \quad Compute $\mathbf{x}^{(k)}_{i}=\mathbf{A}_i\mathbf{x}^{(k-1)}+\mathbf{Q}_i\mathbf{y}$ and send $\mathbf{x}^{(k)}_{i}$ to the master node.
 \State \quad Master node aggregates $\mathbf{x}^{(k)}_{i}$'s to form $\mathbf{x}^{(k)}=\begin{bmatrix}
 {\mathbf{x}_{1}^{(k)}}&
 \ldots&
 {\mathbf{x}_{P}^{(k)}}
 \end{bmatrix}^T$ and then send $\mathbf{x}^{(k)}$ to each worker.
\end{algorithmic}
\end{algorithm}}}
For the BaselineParallel algorithm, the worker computational complexity is $O(nN^2/P)$ and the storage cost is $O(N^{2}/P)$. Note that in each round, the algorithm communicates one $N\times 1$ vector from the master node to each worker node, and one $\frac{N}{P}\times 1$ vector from each worker node to the master node. Since there are $n$ rounds, the communication complexity is $n \beta_1 + nN(1+P)/P \beta_2$.
\section{A new distributed linear inverse solver} \label{sec:parallel-extreme1}
\begin{figure}
  \begin{center}
    \includegraphics[scale=0.55]{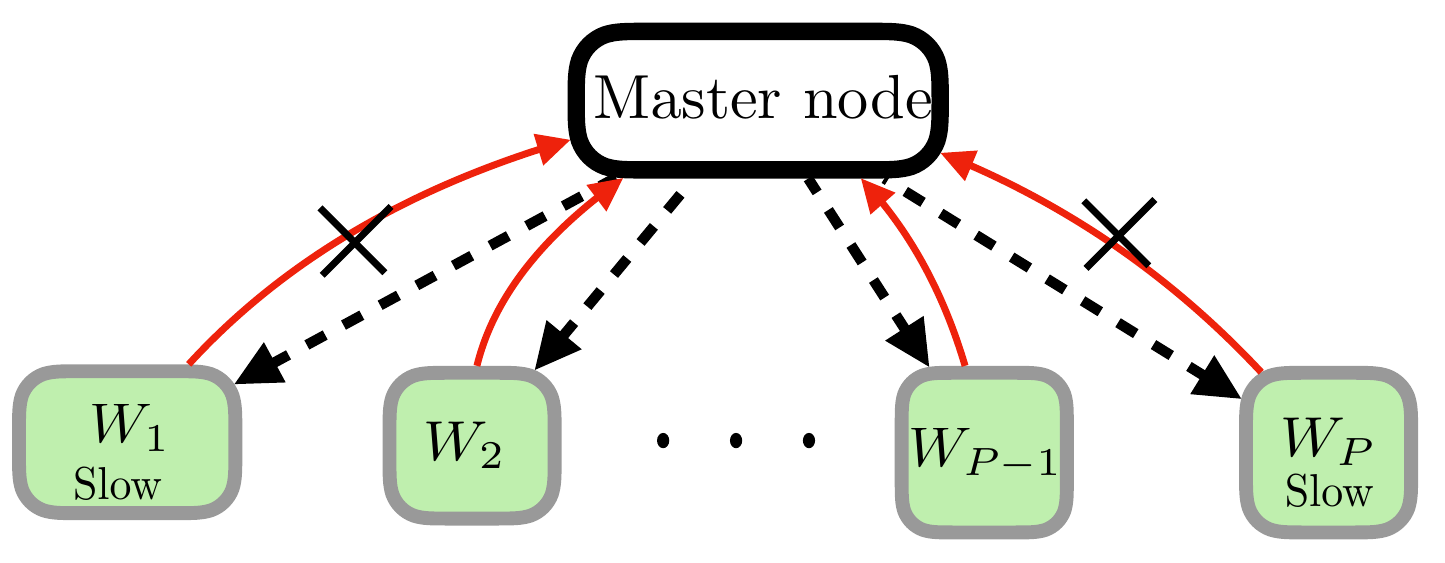}
  \end{center}
\caption{Straggler-resilient algorithm set up where master node waits only for the fastest $K$ worker node at each communication round.} \label{fig:3}
\end{figure}

        In Section \ref{sec:Propertiesofmatrixpoly}, we provide some preliminary results related to matrix polynomials that will be useful in our algorithm development. The results of Section \ref{sec:Propertiesofmatrixpoly} are a review of some results of \cite{MatDot}, as well as some new observations. Then in Section \ref{sec:polybasedalg1}, we describe the {PolyLin} algorithm and its costs. This algorithm needs only one round of communication and is resilient to certain number of stragglers as shown in Fig. \ref{fig:4}.

 \subsection{Properties of certain matrix polynomials}\label{sec:Propertiesofmatrixpoly}
 In this section, we begin by reviewing some relevant ideas and results of \cite{yu2017polynomial,MatDot,MatDot_Allerton}, which studied matrix multiplications. To begin with, consider matrix $\mathbf{B}$ which is split into submatrices such that $\mathbf{B}=\begin{bmatrix}
  \mathbf{\tilde{B}}_{0} & \mathbf{\tilde{B}}_{1}\end{bmatrix}=\begin{bmatrix}
  \mathbf{B}_{20}^T&
  \mathbf{B}_{21}^T
  \end{bmatrix}^T$ where for $j\in\{0,1\}$,  $\mathbf{\tilde{B}}_{j}$ and $\mathbf{B}_{2j}$ have dimension $N \times N / 2$ and $\frac{N}{2}\times N$ respectively. Now, let $p_\mathbf{\mathbf{B}_1}(\xi)= \mathbf{\tilde{B}}_{0} + \mathbf{\tilde{B}}_{1}\xi$ and $p_{\mathbf{B}_2}(\xi)= \mathbf{B}_{20} \xi + \mathbf{B}_{21}$. Note that $\mathbf{B}^2=\mathbf{\tilde{B}}_{0}\mathbf{B}_{20}+\mathbf{\tilde{B}}_{1}\mathbf{B}_{21} $ is the coefficient of $\xi$ of the matrix polynomial  $p_{\mathbf{B}_1}(\xi) p_{\mathbf{B}_2}(\xi) =(\mathbf{\tilde{B}}_{0} + \mathbf{\tilde{B}}_{1}\xi)(\mathbf{B}_{20} \xi + \mathbf{B}_{21})= \mathbf{\tilde{B}}_{1}\mathbf{B}_{20} \xi^2 + (\mathbf{\tilde{B}}_{0}\mathbf{B}_{20}+\mathbf{\tilde{B}}_{1}\mathbf{B}_{21} ) \xi + \mathbf{\tilde{B}}_{0}\mathbf{B}_{21}$. To implement the multiplication $\mathbf{B}^2\mathbf{z}$ in a distributed manner, where $\mathbf{z}$ is a $N\times 1$ vector on $P=3$ worker nodes, select three distinct real numbers $\xi_1, \xi_2, \xi_3$, and allow node $i$ to perform the multiplication $p_{\mathbf{B}_1}(\xi_i) p_{\mathbf{B}_2}(\xi_i) \mathbf{z}$. Then, the master node can interpolates the vector polynomial $p_{\mathbf{B}_1}(\xi_i) p_{\mathbf{B}_2}(\xi_i) \mathbf{z}$, and then finds $\mathbf{B}^2\mathbf{z}$ as the coefficient of $\xi$ of the interpolated polynomial. In general, if the matrix $\mathbf{B}$ is split into $m> 2$ parts (similarly both row and column wise), then by forming polynomials $p_{\mathbf{B}_1}(\xi), p_{\mathbf{B}_2}(\xi)$ of degree $m-1$ in a similar manner, the computation $\mathbf{B}^2\mathbf{z}$ can be performed over the results of any $2m-1$ worker nodes with the master node interpolating the degree $2m-2$ polynomial $p_{\mathbf{B}_1}(\xi_i) p_{\mathbf{B}_2}(\xi_i) \mathbf{z}$. Note interestingly that, if we set $\mathbf{B}=\mathbf{A}$ and $\mathbf{z} = \mathbf{x}^{(0)}$ then the above approach computes $\mathbf{A}^{2}\mathbf{x}^{(0)}$, which is one component of (\ref{eq:equivaldesc}) with $n=2$.

A generalization of this idea to multiply more than two matrices, as described in \cite{MatDot} is as follows. Suppose we want to compute $\mathbf{B}^4\mathbf{z}$. Now, note that $\mathbf{B}^4\mathbf{z}$ is the coefficient of $\xi^3$ in $\big(p_{\mathbf{B}_1}(\xi) p_{\mathbf{B}_2}(\xi)\big)\big(p_{\mathbf{B}_1}(\xi^2) p_{\mathbf{B}_2}(\xi^2)\big)\mathbf{z}$. To implement the multiplication $\mathbf{B}^4\mathbf{z}$ in a distributed manner using $P=7$ worker nodes, choose 7 distinct real numbers $\xi_1,\dots,\xi_7$ and let worker node $i$ perform $\big(p_{\mathbf{B}_1}(\xi_i) p_{\mathbf{B}_2}(\xi_i)\big)\big(p_{\mathbf{B}_1}(\xi_i^2) p_{\mathbf{B}_2}(\xi^2_i)\big)\mathbf{z}$. Finally, the master node similarly can recover $\mathbf{B}^4\mathbf{z}$ via interpolation, which is another component in (\ref{eq:equivaldesc}) by setting $\mathbf{B}=\mathbf{A}$ and $\mathbf{z}=\mathbf{x}^{(0)}$.

 We apply the above observations to our context. Assume $n$ is even, we split $\mathbf{A}$ both vertically and horizontally. Further, split $\mathbf{I}$ and  $\mathbf{Q}$ only horizontally as follows:
            \begin{align}
               \mathbf{A}&=\begin{bmatrix}
                                    \mathbf{A}_{10} & \mathbf{A}_{11} & \dots & \mathbf{A}_{1m-1}
                                    \end{bmatrix}=\begin{bmatrix}
                                             \mathbf{A}_{20} ^T&
                                             \mathbf{A}_{21}^T&
                                             \ldots&
                                             \mathbf{A}_{2\: m-1}^T 
                                             \end{bmatrix}^T\label{eq:splitAeven}\\
                          \mathbf{Q}&=\begin{bmatrix}
                                               \mathbf{Q}_{20}^T & \mathbf{Q}^T_{21} & \dots & \mathbf{Q}^T_{2\:m-1}
                                               \end{bmatrix}^T,
                                               \mathbf{I}=[\mathbf{I}_0^T,\dots,\mathbf{I}_{m-1}^T]^T   \label{eq:splitI}\end{align}
  where $\mathbf{I}$ is an identity matrix of dimension $N\times N$. Next, form the polynomials
 {{\small{ \begin{align}
  p_{\mathbf{A}_1}(\xi)=\sum_{j=0}^{m-1}\mathbf{A}_{1j}\xi^j,\:
  p_{\mathbf{A}_2}(\xi)=\sum_{j=0}^{m-1}\mathbf{A}_{2j}\xi^{m-1-j},
  p_{\mathbf{Q}_2}(\xi)=\sum_{j=0}^{m-1}\mathbf{Q}_{2j}\xi^{m-1-j}
  ,\:p_{\mathbf{I}}(\xi)=\sum_{j=0}^{m-1}\mathbf{I}_j\xi^j\label{eq:polI}
  \end{align}}}
  and set $p_{\mathbf{C}}(\xi)\triangleq p_{\mathbf{A}_1}(\xi)p_{\mathbf{A}_2}(\xi)$,  and $p_{\mathbf{D}}(\xi)\triangleq p_{\mathbf{A}_1}(\xi)p_{\mathbf{Q}_2}(\xi)$. While the dimension of $p_{\mathbf{A}_1}(\xi)$ is $N\times\frac{N}{m}$, the dimensions of $p_{\mathbf{A}_2}(\xi),p_{\mathbf{Q}_2}(\xi)$ and $p_{\mathbf{I}}(\xi)$ are $\frac{N}{m}\times N$.
Note that $\mathbf{A}^2$ are the coefficient of $\xi^{m-1}$ in $p_{\mathbf{C}}(\xi)$. Recall that the ordering of multiplication is important, so e.g. $\Pi_{i= i_0}^{i_1} \mathbf{M}_i=\mathbf{M}_{i_0} \mathbf{M}_{i_0-1} \dots \mathbf{M}_{i_1+1}\mathbf{M}_{i_1}$ for $i_0 > i_1$. For $l\geq 4$, let $P^{(l,m)}(\xi) \triangleq\left\{ \begin{array}{ll}
         p_{\mathbf{I}}(\xi^{m^{\lceil\frac{l}{2}\rceil-1}})\Pi_{i=\frac{l-1}{2}}^{2}p_{\mathbf{C}}(\xi^{m^{i-1}})p_{\mathbf{D}}(\xi)& \text{odd}\:l,\\
         \Pi_{i=\frac{l}{2}}^{2}p_{\mathbf{C}}(\xi^{m^{i-1}})p_{\mathbf{D}}(\xi) & \text{else}.
         \end{array}\right.$. Then, we have:
 \begin{lemma}\label{claim:evensumdot}
  For even $n$, $\mathbf{A}^{n}\mathbf{x}^{(0)}+\sum_{i=1}^{n}\mathbf{A}^{i-1}\mathbf{Q}\mathbf{y}$ is the coefficient of $\xi^{m^{\frac{n}{2}}-1}$ in the degree- $2m^{\frac{n}{2}}-2$ polynomial
 \begin{align}
 \eta(\xi,n)&=\Pi_{i=\frac{n}{2}}^{1} p_{\mathbf{C}}(\xi^{m^{i-1}})\mathbf{x}^{(0)}+\sum_{i=5}^{n}\xi^{m^{\frac{n}{2}}-m^{\lceil\frac{i}{2}\rceil}} P^{(i,m)}(\xi)\mathbf{y}\nonumber\\
 &+ \xi^{m^{\frac{n}{2}}-m^2}\big(p_{\mathbf{C}}(\xi^m)+p_{\mathbf{I}}(\xi^m) p_{\mathbf{A}_2}(\xi^m)\big)p_{\mathbf{D}}(\xi)\mathbf{y}+\xi^{m^{\frac{n}{2}}-m}(p_{\mathbf{D}}(\xi)+p_{\mathbf{I}}(\xi)p_{\mathbf{Q}_2}(\xi))\mathbf{y}\label{eq:final-interpolation}.
 \end{align}

 \end{lemma}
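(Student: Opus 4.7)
The plan is to verify the two halves of the claim separately: (i) the initial-state term $\Pi_{i=n/2}^{1}p_{\mathbf{C}}(\xi^{m^{i-1}})\mathbf{x}^{(0)}$ contributes $\mathbf{A}^{n}\mathbf{x}^{(0)}$ at the coefficient of $\xi^{m^{n/2}-1}$, and (ii) the remaining $\mathbf{y}$-dependent summands jointly contribute $\sum_{i=1}^{n}\mathbf{A}^{i-1}\mathbf{Q}\mathbf{y}$ at the same coefficient. Throughout, I rely on four ``atomic'' identities that follow immediately from the block splittings of $\mathbf{A}$, $\mathbf{Q}$, $\mathbf{I}$: the coefficient of $\xi^{m-1}$ in $p_{\mathbf{C}}(\xi)$ is $\sum_{j=0}^{m-1}\mathbf{A}_{1j}\mathbf{A}_{2j}=\mathbf{A}^{2}$, in $p_{\mathbf{D}}(\xi)$ is $\mathbf{A}\mathbf{Q}$, in $p_{\mathbf{I}}(\xi)p_{\mathbf{A}_{2}}(\xi)$ is $\mathbf{A}$, and in $p_{\mathbf{I}}(\xi)p_{\mathbf{Q}_{2}}(\xi)$ is $\mathbf{Q}$. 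These are the ``building matrices'' out of which every $\mathbf{A}^{i-1}\mathbf{Q}$ is assembled.

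For part (i), I would use the decomposition $m^{n/2}-1=\sum_{i=1}^{n/2}(m-1)m^{i-1}$ together with the fact that this is the \emph{unique} way to write $m^{n/2}-1$ as $\sum_{i=1}^{n/2}c_{i}m^{i-1}$ with $0\leq c_{i}\leq 2m-2$. Uniqueness is a short induction on $n/2$: reducing the equation modulo $m$ forces $c_{1}\equiv m-1\pmod{m}$, and with $c_{1}\leq 2m-2$ this pins down $c_{1}=m-1$; the same argument applies inductively to $(c_{2},\ldots,c_{n/2})$. Since each $p_{\mathbf{C}}(\xi^{m^{i-1}})$ contains only terms of degree $c\cdot m^{i-1}$ with $0\leq c\leq 2m-2$, the coefficient of $\xi^{m^{n/2}-1}$ in $\Pi_{i=n/2}^{1}p_{\mathbf{C}}(\xi^{m^{i-1}})$ picks out exactly the ``all $\xi^{m-1}$'' selection in each factor, which evaluates to $\mathbf{A}^{2}\cdots\mathbf{A}^{2}=\mathbf{A}^{n}$ applied to $\mathbf{x}^{(0)}$.

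For part (ii), I would partition the $\mathbf{y}$-summands by the value $k$ appearing in the outer shift $\xi^{m^{n/2}-m^{k}}$, and show that for each $k\geq 1$ the shifted expression extracts the two contributions $\mathbf{A}^{i-1}\mathbf{Q}\mathbf{y}$ with $\lceil i/2\rceil=k$, namely $i=2k-1$ and $i=2k$. The cases $k=1$ and $k=2$ are direct by the atomic identities: the coefficient of $\xi^{m-1}$ in $(p_{\mathbf{D}}+p_{\mathbf{I}}p_{\mathbf{Q}_{2}})\mathbf{y}$ is $(\mathbf{A}\mathbf{Q}+\mathbf{Q})\mathbf{y}$, covering $i=1,2$; the coefficient of $\xi^{m^{2}-1}$ in $(p_{\mathbf{C}}(\xi^{m})+p_{\mathbf{I}}(\xi^{m})p_{\mathbf{A}_{2}}(\xi^{m}))p_{\mathbf{D}}(\xi)\mathbf{y}$ is $(\mathbf{A}^{2}+\mathbf{A})(\mathbf{A}\mathbf{Q})\mathbf{y}$, covering $i=3,4$. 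For $k\geq 3$ the two summands correspond to the factorizations $\mathbf{A}^{2k-1}\mathbf{Q}=(\mathbf{A}^{2})^{k-1}(\mathbf{A}\mathbf{Q})$ (even $i=2k$) and $\mathbf{A}^{2k-2}\mathbf{Q}=\mathbf{A}\cdot(\mathbf{A}^{2})^{k-2}\cdot(\mathbf{A}\mathbf{Q})$ (odd $i=2k-1$), which align with $P^{(2k,m)}$ and $P^{(2k-1,m)}$ once one attributes degree $(m-1)m^{j-1}$ to each $p_{\mathbf{C}}(\xi^{m^{j-1}})$ factor, degree $(m-1)m^{k-1}$ to the single $p_{\mathbf{I}}p_{\mathbf{A}_{2}}$ block used in the odd case, and degree $m-1$ to the final $p_{\mathbf{D}}$. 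The same base-$m$ uniqueness applied to $m^{k}-1$ ensures that no other cross-terms of the constituent polynomials contribute to the coefficient of $\xi^{m^{k}-1}$ in $P^{(\cdot,m)}\mathbf{y}$.

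The main obstacle I expect is the bookkeeping for uniqueness of coefficient extraction across the nested polynomial products. The cleanest route is to isolate once a single ``representation lemma'': for any polynomial product $\Pi_{j}q_{j}(\xi^{m^{\alpha_{j}}})$ where each $q_{j}$ has degree at most $2(m-1)$ and the $\alpha_{j}$'s form a consecutive block of nonnegative integers, the coefficient of $\xi^{\sum_{j}(m-1)m^{\alpha_{j}}}$ equals the ordered product of the degree-$(m-1)$ coefficients of the $q_{j}$'s. This follows from the same base-$m$ digit argument used in part (i), and with it in hand, every sub-case of part (ii) reduces mechanically to multiplying the four atomic identities listed at the start.
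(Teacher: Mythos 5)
Your proof is correct and follows the same skeleton as the paper's: split the target into the initial-state term plus the consecutive pairs $\mathbf{A}^{2k-1}\mathbf{Q}\mathbf{y}+\mathbf{A}^{2k-2}\mathbf{Q}\mathbf{y}$ for $k=1,\dots,\tfrac{n}{2}$, identify each pair with the coefficient of $\xi^{m^{k}-1}$ in $P^{(2k,m)}(\xi)+P^{(2k-1,m)}(\xi)$, and use the monomial shifts $\xi^{m^{n/2}-m^{k}}$ to align every contribution at the single exponent $m^{n/2}-1$ (the paper's Corollary 2 and Lemma 4 in the appendix are exactly your $k$-indexed pairing and shift steps). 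The one genuine difference is at the level of the key sub-lemma: the paper imports the fact that $\mathbf{A}^{l-1}\mathbf{Q}$ is the coefficient of $\xi^{m^{\lceil l/2\rceil}-1}$ in $P^{(l,m)}(\xi)$ as a black box from the MatDot reference, whereas you re-derive it (together with the $\mathbf{A}^{n}$ statement of your part (i)) from first principles via the uniqueness of the bounded-digit base-$m$ representation of $m^{k}-1$; your ``representation lemma'' is a clean, self-contained substitute whose hypotheses (each factor of degree at most $2m-2$ in $\xi^{m^{\alpha_j}}$, with consecutive $\alpha_j$) make transparent exactly why no cross-terms contaminate the extracted coefficient. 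Two minor remarks: your dimensionally consistent reading of the odd case as a combined $p_{\mathbf{I}}p_{\mathbf{A}_2}$ block (rather than a bare $p_{\mathbf{I}}$ factor, as the displayed definition of $P^{(l,m)}$ literally suggests) is the right one and matches the explicit $k=2$ term of $\eta$; and you should also record, as the paper's Lemma 4 does, that each shifted summand has degree $m^{n/2}+m^{k}-2\leq 2m^{n/2}-2$, which is what justifies the stated degree of $\eta(\xi,n)$ and hence the number of evaluations needed for interpolation.
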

\begin{corollary}\label{corol:decodingcomplexity}
Let $K=2m^{\frac{n}{2}}-1$ and $\xi_1, \xi_2, \dots, \xi_{K}$ be distinct numbers. Then, there is an algorithm with complexity $O(NK\log^2K\log(\log K))$ that takes as input $\eta(\xi_1,n),\dots,\eta(\xi_{K},n)$ (polynomial in (\ref{eq:final-interpolation})) and outputs $\mathbf{A}^{n}\mathbf{x}^{(0)} + \sum_{i=1}^n \mathbf{A}^{i-1}\mathbf{Q}\mathbf{y}$.
\end{corollary}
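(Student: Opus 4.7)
The plan is to reduce the task to coordinate-wise fast polynomial interpolation and then invoke a standard FFT-based interpolation algorithm. First I would verify the degree/evaluation count: by Lemma \ref{claim:evensumdot}, $\eta(\xi,n)$ is a polynomial in $\xi$ of degree $2m^{n/2}-2 = K-1$ whose coefficients are $N\times 1$ vectors over $\mathbb{R}$. Since $K = 2m^{n/2}-1$ distinct evaluation points are supplied, Lagrange interpolation uniquely recovers all $K$ vector coefficients, and in particular the coefficient of $\xi^{m^{n/2}-1}$, which by Lemma \ref{claim:evensumdot} equals $\mathbf{A}^{n}\mathbf{x}^{(0)} + \sum_{i=1}^{n}\mathbf{A}^{i-1}\mathbf{Q}\mathbf{y}$.

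Next I would decompose the vector-valued interpolation into $N$ independent scalar interpolations: for each coordinate $j\in\{1,\dots,N\}$, let $\eta_j(\xi,n)$ denote the $j$-th entry of $\eta(\xi,n)$, which is a scalar polynomial of degree at most $K-1$. The input $\eta(\xi_1,n),\dots,\eta(\xi_K,n)$ thus supplies, for each $j$, the $K$ scalar evaluations $\eta_j(\xi_1,n),\dots,\eta_j(\xi_K,n)$ that determine $\eta_j(\xi,n)$ uniquely.

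For the complexity, I would invoke the standard fast polynomial interpolation algorithm based on subproduct trees (see, e.g., von zur Gathen and Gerhard, \emph{Modern Computer Algebra}, Ch.~10), which recovers a degree-$(K-1)$ polynomial from $K$ evaluations in $O(K\log^2 K \log\log K)$ arithmetic operations. Running this algorithm once per coordinate yields a total cost of $O(NK\log^2 K\log\log K)$. Extracting the single coefficient of $\xi^{m^{n/2}-1}$ from each of the $N$ recovered polynomials is an $O(N)$ step and does not affect the overall complexity.

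The only subtle point, and the part I would be most careful about, is confirming that the degree bound $2m^{n/2}-2$ claimed in Lemma \ref{claim:evensumdot} is tight enough that $K = 2m^{n/2}-1$ evaluations actually suffice for unique interpolation (and that the various shifts $\xi^{m^{n/2}-m^{\lceil i/2\rceil}}$ and the nested terms $p_{\mathbf{C}}(\xi^{m^{i-1}})$ do not push any term past this degree). Once that bookkeeping is in hand, the corollary follows immediately from the cited fast interpolation bound applied $N$ times.
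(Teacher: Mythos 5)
Your proposal is correct and follows essentially the same route as the paper: both invoke Lemma \ref{claim:evensumdot} for the degree bound $K-1$ and the identity of the coefficient of $\xi^{m^{n/2}-1}$, and both reduce the vector-valued interpolation to $N$ coordinate-wise fast interpolations at cost $O(K\log^2 K\log\log K)$ each (the paper cites Kedlaya--Umans where you cite von zur Gathen--Gerhard, but the bound is the same). The degree bookkeeping you flag as the subtle point is indeed the content of Lemma \ref{claim:evensumdot} and its appendix proof, not something the corollary itself needs to re-establish.
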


The proof of Lemma \ref{claim:evensumdot} (which is in Appendix \ref{c}) combined with polynomial interpolation. Since interpolating a degree $d-1$ polynomial has complexity $O(d\log^2d\log(\log d))$ \cite{kedlaya2011fast} and noting that $\eta(\xi_i,n)$ are vectors of dimension $N\times 1$, the complexity mentioned in the corollary statement follows.\vspace{-0.5 em}
 \subsection{{PolyLin}:  A polynomial-evaluation based {fault-tolerant} distributed linear inverse solver}\label{sec:polybasedalg1}
 We present PolyLin  in Algorithm \ref{pts1} for the case when the number of iterations $n$ is even, The case of odd n is a bit more technically involved, since it involves extending Lemma \ref{claim:evensumdot} for this case, and is omitted in this submission.
 In Algorithm \ref{pts1}, worker node $i$ computes $\eta(\xi_i,n)$ in (\ref{eq:final-interpolation}) iteratively. We ensure that there are $2m^{n/2}-1$ evaluations of $\eta(\xi,n)$, so that the master node can obtain the worker node output and reconstruct (\ref{eq:equivaldesc}) based on Corollary \ref{corol:decodingcomplexity}. {We do this by setting $2m^{n/2}-1 = K$, that is $m={(\frac{K+1}{2})}^{\frac{2}{n}}$ where $K\leq P$.} 

 First note that $\eta(\xi,n)=\mathbf{R}^{(n)}+\mathbf{s}^{(n)}$ where $\mathbf{R}^{(n)}\triangleq\Pi_{i=\frac{n}{2}}^{1} p_{\mathbf{C}}(\xi^{m^{i-1}})\mathbf{x}^{(0)}$ and $
 \mathbf{s}^{(n)}\triangleq\sum_{i=5}^{n}\xi^{m^{\frac{n}{2}}-m^{\lceil\frac{i}{2}\rceil}} P^{(i,m)}(\xi)\mathbf{y}+ \xi^{m^{\frac{n}{2}}-m^2}\big(p_{\mathbf{C}}(\xi^m)+p_{\mathbf{I}}(\xi^m) p_{\mathbf{A}_2}(\xi^m)\big)p_{\mathbf{D}}(\xi)\mathbf{y}+\xi^{m^{\frac{n}{2}}-m}(p_{\mathbf{D}}(\xi)+p_{\mathbf{I}}(\xi)p_{\mathbf{Q}_2}(\xi))\mathbf{y}$. Therefore, to compute $\eta(\xi,n)$, each worker can compute separately $\mathbf{R}^{(n)}$ and $\mathbf{s}^{(n)}$
 iteratively and add them in final iteration.
 The variable $\mathbf{r}^{(i)}$ is used to compute $\mathbf{R}^{(n)}=\Pi_{i=\frac{n}{2}}^{1} p_{\mathbf{C}}(\xi^{m^{i-1}})\mathbf{x}^{(0)}$ iteratively. To this end, initially for $i=1$ each worker computes $\mathbf{r}^{(1)}=p_{\mathbf{A}_2}(\xi)\mathbf{x}^{(0)}$ in line 6 of the Algorithm \ref{pts1}. Then, given the evaluation polynomials, it computes $\mathbf{r}^{(i)}=\left\{ \begin{array}{ll}
          p_{\mathbf{A}_1}(\xi^{m^{\lceil\frac{i}{2}\rceil-1}})\mathbf{r}^{(i-1)}& \text{even}\:i,\\
          p_{\mathbf{A}_2}(\xi^{m^{\lceil\frac{i}{2}\rceil-1}})\mathbf{r}^{(i-1)} & \text{odd}\:i,
          \end{array}\right.$ which is shown in lines the 8 and 9 of algorithm. So, for even $n$, $\mathbf{R}^{(n)}=\mathbf{r}^{(n)}=\Pi_{i=\frac{n}{2}}^{1} p_{\mathbf{C}}(\xi^{m^{i-1}})\mathbf{x}^{(0)}$.
          To compute $\mathbf{s}^{(n)}$ iteratively, we need to compute $P^{(i,m)}$ at each iteration. Two variables $\mathbf{s}^{(i)},\mathbf{w}^{(i)}$ are used to compute $P^{(i,m)}$ at iteration $i$. The term $\xi^{m^{\frac{n}{2}}-m}(p_{\mathbf{D}}(\xi)+p_{\mathbf{I}}(\xi)p_{\mathbf{Q}_2}(\xi))\mathbf{y}$ is  computed in line 6 of the algorithm. Equations in line 8 and 9 imply that for $i\geq 2$,
$\mathbf{w}^{(i)}=\left\{ \begin{array}{ll}
         p_{\mathbf{A}_1}(\xi^{m^{\lceil\frac{i}{2}\rceil-1}})\mathbf{w}^{(i-1)}& \text{even}\:i,\\
         p_{\mathbf{A}_2}(\xi^{m^{\lceil\frac{i}{2}\rceil-1}})\mathbf{w}^{(i-1)} & \text{odd}\:i,
         \end{array}\right.$. The computation in line 9 then implies $\mathbf{s}^{(i)}=\mathbf{s}^{(i-2)}++\xi_l^{m^{\frac{n}{2}}-m^{\lceil\frac{i}{2}\rceil}}(\mathbf{w}^{(i)}+p_{\mathbf{I}}(\xi^{m^{\lceil\frac{i}{2}\rceil-1}}_l)\mathbf{w}^{(i-1)})$ as required. Therefore, after iteration $n$, $\mathbf{R}^{(n)}+\mathbf{s}^{(n)}=\eta(\xi,n)$.

\begin{theorem}\label{thm:dist1}
If $n\geq \log_{\frac{1}{|\sigma_1|}}\frac{N\underset{1\leq i\leq N}{\max}|\alpha_i|}{\epsilon}$, the error associated with PolyLin in computing the solution of inverse problem is at most $\epsilon$.
\end{theorem}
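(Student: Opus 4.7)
The strategy is to reduce the theorem to Lemma \ref{lem:nupbound} by showing that PolyLin, in exact arithmetic, produces the same iterate $\mathbf{x}^{(n)}$ as the centralized recursion (\ref{eq:equivaldesc}). Once that correctness statement is in place, the hypothesis on $n$ is exactly the hypothesis of Lemma \ref{lem:nupbound}, so $\|\mathbf{x}^{(n)}-\mathbf{x}^*\|\le\epsilon$ follows immediately. So the entire burden of the proof is an algorithmic correctness claim, not a new convergence argument.

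To establish the correctness claim, I would proceed in two stages. First, I would argue that each surviving worker indeed outputs the value $\eta(\xi_i,n)$ defined in (\ref{eq:final-interpolation}). This is done by induction on the iteration index $i$ using the updates described immediately before the theorem. The induction hypothesis splits into two parallel claims: (a) $\mathbf{r}^{(i)}$ equals the partial product $\Pi_{j=\lceil i/2 \rceil}^{1} p_{\mathbf{C}}(\xi^{m^{j-1}})\mathbf{x}^{(0)}$ when $i$ is even and its left-factored analogue with a trailing $p_{\mathbf{A}_2}$ when $i$ is odd; and (b) the pair $(\mathbf{w}^{(i)},\mathbf{s}^{(i)})$ builds the telescoping sum in $\mathbf{s}^{(n)}$ so that after iteration $n$, $\mathbf{s}^{(n)}$ matches the sum of shifted $P^{(i,m)}(\xi)\mathbf{y}$ terms in (\ref{eq:final-interpolation}). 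Adding them in the final iteration yields $\mathbf{R}^{(n)}+\mathbf{s}^{(n)}=\eta(\xi_i,n)$.

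Second, I would invoke Lemma \ref{claim:evensumdot} together with Corollary \ref{corol:decodingcomplexity}: since any $K=2m^{n/2}-1$ of the $P$ workers return evaluations of a single polynomial of degree $2m^{n/2}-2$ at distinct points $\xi_i$, polynomial interpolation at the master node recovers the polynomial uniquely. The coefficient of $\xi^{m^{n/2}-1}$ is, by Lemma \ref{claim:evensumdot}, exactly $\mathbf{A}^n\mathbf{x}^{(0)}+\sum_{i=1}^n\mathbf{A}^{i-1}\mathbf{Q}\mathbf{y}$, which by (\ref{eq:equivaldesc}) equals $\mathbf{x}^{(n)}$. The choice $m=((K+1)/2)^{2/n}$ made in the algorithm is exactly what makes the number of required evaluations equal to $K$, so the straggler-tolerance condition does not interfere with correctness.

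Combining the two stages, the master node's output coincides with the $n$-th iterate of (\ref{eq:General-iterative}); applying Lemma \ref{lem:nupbound} to this iterate completes the proof. The main technical obstacle is the first stage: tracking the odd/even indexing, the nested powers $\xi^{m^{i-1}}$, and the order-sensitive matrix products $\Pi_{i=\ell}^{1}$ through the induction. The arithmetic itself is straightforward, but any slip in the index bookkeeping breaks the match with (\ref{eq:final-interpolation}), so the careful definitions of $\mathbf{r}^{(i)}, \mathbf{w}^{(i)}, \mathbf{s}^{(i)}$ given in the discussion preceding the theorem are what make the induction go through cleanly.
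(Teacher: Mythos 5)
Your proposal is correct and follows essentially the same route as the paper: establish that each worker's iterative updates of $\mathbf{r}^{(i)},\mathbf{w}^{(i)},\mathbf{s}^{(i)}$ produce $\eta(\xi_l,n)$, then use Lemma \ref{claim:evensumdot} and Corollary \ref{corol:decodingcomplexity} to interpolate the degree-$(2m^{n/2}-2)$ polynomial from $K=2m^{n/2}-1$ evaluations and extract $\mathbf{A}^n\mathbf{x}^{(0)}+\sum_{i=1}^n\mathbf{A}^{i-1}\mathbf{Q}\mathbf{y}=\mathbf{x}^{(n)}$, and finally apply Lemma \ref{lem:nupbound}. This matches the paper's argument in both structure and the lemmas invoked.
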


In Algorithm \ref{pts1} eventually the $l$-th processing node computes $\mathbf{R}^{(n)}+\mathbf{s}^{(n)}=\eta(\xi_l)$. So,
using Lemma \ref{claim:evensumdot} and Corollary \ref{corol:decodingcomplexity} and noting that master node receives $K=2m^{\frac{n}{2}}-1$ distinct polynomial evaluation vectors, it can recover $\mathbf{A}^n\mathbf{x}^{(0)}+\sum_{i=1}^{n}\mathbf{A}^{i-1}\mathbf{Q}\mathbf{y}$ and the proof of the Theorem \ref{thm:dist1} follows.

{\small{\begin{algorithm}[h]
     \caption{PolyLin $(\mathbf{A},\mathbf{Q},P,K,\mathbf{x}^{(0)},\mathbf{y},n)$ \label{pts1}}
     \begin{algorithmic}[1]
    \State \textbf{One-time preprocessing step: Input: $\mathbf{A}_{N\times N}$}, given the number of iterations $n$, vector $\mathbf{y}$, matrix $\mathbf{Q}$ (\textbf{even $\mathbf{n}$ and matrix $\mathbf{Q}$}) initial point $\mathbf{x}^{(0)}$ and $P-K$ stragglers.
       \State \textbf{Master node:} Split $\mathbf{A}$ and $\mathbf{Q}$ using (\ref{eq:splitAeven}) and (\ref{eq:splitI}). Set $\xi_1,\xi_2,\dots,\xi_P$ be arbitrary distinct numbers. The master node sends $\mathbf{\xi^{(0)}}$, $\mathbf{y}$, $p_{\mathbf{Q}_2}(\xi_l)$, $p_{\mathbf{I}}(\xi_l^{m^{i-1}})$ (odd $i$), $p_{\mathbf{A}_1}(\xi_l^{m^{i-1}})$ and $p_{\mathbf{A}_2}(\xi_l^{m^{i-1}})$ in (\ref{eq:polI}) for $1\leq i\leq { \frac{{n}} {2}}$ to worker $l\in\{1,2,\dots,P\}$ where $\mathbf{m}=(\frac{K+1}{2})^{\frac{2}{n}}$ and $K\leq P$.
            \State \textbf{Online computations:}
       \State \textbf{For}  ${l=1}$ to ${P}$ do
       \State  \quad \textbf{Input:}  $\mathbf{\xi^{(0)}}$, $\mathbf{y}$,  $p_{\mathbf{Q}_2}(\xi_l)$, $p_{\mathbf{I}}(\xi_l^{m^{i-1}})$, $p_{\mathbf{A}_2}(\xi_l^{m^{i-1}})$ for odd $i$ and $p_{\mathbf{A}_2}(\xi_l^{m^{i-1}})$ for even $i$ where $1\leq i\leq { \frac{{n}} {2}}$.
       \State \textbf{Compute}: $
         \mathbf{r}^{(1)}=p_{\mathbf{A}_2}(\xi_l)\mathbf{x}^{(0)}, \mathbf{r}^{(2)}=p_{\mathbf{A}_1}(\xi_l)\mathbf{r}^{(1)},
       \mathbf{w}^{(1)}=p_{\mathbf{Q}_{2}}(\xi_l)\mathbf{y},\mathbf{w}^{(2)}=p_{\mathbf{A}_1}(\xi_l)\mathbf{w}^{(1)},\mathbf{s}^{(2)}=\xi_l^{m^{\frac{n}{2}}-m}(\mathbf{w}^{(2)}+p_{\mathbf{I}}(\xi_l)\mathbf{w}^{(1)})$
       \State \textbf{For} $i=3:n$
        \State \quad \textbf{If} $i$ is odd, compute $
        \mathbf{r}^{(i)}=p_{\mathbf{A}_2}(\xi_l^{m^{\lceil\frac{i}{2}\rceil-1}})\mathbf{r}^{(i-1)},\mathbf{w}^{(i)}=p_{\mathbf{A}_2}(\xi^{m^{\lceil\frac{i}{2}\rceil-1}}_l)\mathbf{w}^{(i-1)},$
        \State \quad \textbf{Else} compute,
  $\mathbf{r}^{(i)}=p_{\mathbf{A}_1}(\xi_l^{m^{\lceil\frac{i}{2}\rceil-1}})\mathbf{r}^{(i-1)},\mathbf{w}^{(i)}=p_{\mathbf{A}_1}(\xi^{m^{\lceil\frac{i}{2}\rceil-1}}_l)\mathbf{w}^{(i-1)},\mathbf{s}^{(i)}=\mathbf{s}^{(i-2)}+\xi_l^{m^{\frac{n}{2}}-m^{\lceil\frac{i}{2}\rceil}}(\mathbf{w}^{(i)}+p_{\mathbf{I}}(\xi^{m^{\lceil\frac{i}{2}\rceil-1}}_l)\mathbf{w}^{(i-1)})$
            \State \quad \textbf{Output:} $\mathbf{r}^{(n)}(\xi_l)+\mathbf{s}^{(n)}(\xi_l)$
       \State \textbf{Post-Processing in the master node:}
       Interpolate the output from the fastest $K$ workers' results.
     \end{algorithmic}
     \end{algorithm}}}\vspace{-0.5 em}

\section{Trade-off Between Communication and Computation Costs}\label{sec:tradeoff}
While communication cost of the PolyLin, $\beta_1+N\beta_2$ is smaller than the communication cost of BaselineParallel, $n\beta_1+nN(\frac{1+P}{P})\beta_2$, the computational cost of the PolyLin, $O({nN^2}/{{(\frac{K+1}{2})}^{\frac{2}{n}}})$ {where $K\leq P$} is larger than the computational cost of the BaselineParallel, $O(\frac{nN^2}{P})$.
Next, we present the MRPolylin algorithm that is a generalization that trades-off between these two extremes.

\textbf{Algorithm and its description:} MRPolyLin is a parametrized version of PolyLin over some integer $\ell$, which is smaller than the number of iterations $n$ and $\ell|n$. MRPolyLin divides $n$ iterations into $\ell$ phases and in each phase we conduct PolyLin for $n/\ell$ iterations. 
As compared with the PolyLin algorithm, the number of communication rounds changes from $1$ to $\ell$, and the worker computational cost reduces to $O({nN^2}/{{(\frac{K+1}{2})}^{\frac{2\ell}{n}}})$ {where $K\leq P$}.
\begin{algorithm}[h]
     \caption{MRPolyLin $(\mathbf{A},\mathbf{Q},P,K,\mathbf{x}^{(0)},\mathbf{y},n,\ell)$ \label{alg:pts3r}}
     \begin{algorithmic}[1]
    \State \textbf{Input:} $\mathbf{A}_{N\times N}$, the required number of iterations $n$ and $P$ workers, and vector $\mathbf{y}$, initial point $\mathbf{x}^{(0)}$ and $\ell$ rounds of communication.
       \State \textbf{For} ${j=1}$ to ${\ell}$ repeat: $\mathbf{x}^{(j)}=$
       PolyLin$(\mathbf{A},\mathbf{Q},P,K,\mathbf{x}^{(j-1)},\mathbf{y},\frac{n}{\ell})$.
      \end{algorithmic}
     \end{algorithm}

\begin{table*}
\centering
\caption{In this table $n=\log_{\frac{1}{|\sigma_1|}}\frac{N\underset{1\leq i\leq N}{\max}|\alpha_i|}{\epsilon}, n^*=n\log{1/|\sigma_1|}-\log(N\max|\alpha_i|) $.} \label{table:comp}
\resizebox{1.1\linewidth}{!}{
\begin{tabular}{c|ccccc}
  \hline
  Strategy & Computational cost & Storage cost & {Straggler resilience}& Communication cost\\
  \hline
  \\
{DSVRG} & {$O(\frac{n^*N^2}{P}+n^*N\kappa)$} & $O(\frac{N^2}{P})$ & {0}& {$\beta_1n^*(1+\frac{P\kappa}{N})+\beta_2O(n^*N(1+\frac{\kappa}{N}))$}
\\
            \hline
            \\
{DASVRG} & {$O(\frac{n^*N^2}{P}+n^*\sqrt{\frac{N^3}{P}\kappa})$} &  $O(\frac{N^2}{P})$ & {0}& {$\beta_1n^*(1+\sqrt{\frac{P\kappa}{N}})\log(1+\frac{P\kappa}{N})+\beta_2O(n^*N(1+\sqrt{\frac{P\kappa}{N}}))$}
            \\
                        \hline
                        \\
{Karakus \cite{karakus2017straggler}} & {$O(\frac{nN^2}{K})$} &  {$O(\frac{N^2}{K})$} & {$P-K$}& {$\beta_1n+\beta_2n(2N)$}
            \\
                        \hline
                        \\
 BaselineParallel & $O(\frac{nN^2}{P})$ & $O(\frac{N^2}{P})$ & {0}& $\beta_1n+\beta_2n(\frac{1+P}{P})N$ \\
 \hline
 \\
  PolyLin & $O({nN^2}/{{(\frac{K+1}{2})}^{\frac{2}{n}}}),K\leq P$ &  $O((n+1){N^2}/{(\frac{K+1}{2})}^{\frac{2}{n}})$& {$P-K$} & $\beta_1+\beta_2(2N)$
  \\
  \hline
  \\
  MRPolyLin & $O({nN^2}/{{(\frac{K+1}{2})}^{\frac{2\ell}{n}}}),K\leq P$ & $O((\frac{n+\ell}{\ell}){N^2}/{(\frac{K+1}{2})}^{\frac{2\ell}{n}})$& {$P-K$} & $\beta_1\ell+\beta_2 (2\ell N)$
   \\
    \hline
\end{tabular}}
\end{table*}\vspace{-0.5 em}


\paragraph{Comparison of various algorithms in Table \ref{table:comp}:} While the pre and post-processing cost of DSVRG, DASVRG and BaselineParallel are $O(N)$ and $O(nN)$ respectively, pre and post-processing cost of  MRPolyLin are $O(nPN^2/\ell)$ and $O(\ell N K\log^2K\log(\log K))$ respectively (Letting $\ell=1$ gives the pre/post processing cost of PolyLin). {As can be seen in Table \ref{table:comp}, in DSVRG and DASVRG algorithms (and others in the Appendix \ref{e}), the number of communication rounds is proportional to $n^*=n\log{1/|\sigma_1|}-\log(N\max\alpha_i)=\log(\frac{1}{\epsilon})$. However, the communication cost of BaselineParallel is proportional to $n=\log_{\frac{1}{|\sigma_1|}}\frac{N\underset{1\leq i\leq N}{\max}|\alpha_i|}{\epsilon}$. Note that {when the error-requirement $\epsilon$ is small enough so that $n\geq \frac{\log(N\max\alpha_i)}{\log(\frac{1}{\sigma_1})-1}$, our baseline algorithm is comparable to DSVRG and DASVRG schemes in terms of computational and communication costs.} Moreover, in one extreme while PolyLin requires one round of communication for solving the linear inverse problem at the cost of higher computational cost, MRPolyLin introduces a trade-off between communication and computation cost that can not be achieved by competing schemes in the Table \ref{table:comp}. {Finally, we report that PolyLin, MRPolyLin and the suggested algorithm in (\cite{karakus2017straggler}), tolerate $P-K$ stragglers.}
\section{{Experiments}}\label{experiment}
\begin{figure}
  \begin{center}
    \includegraphics[scale=0.55]{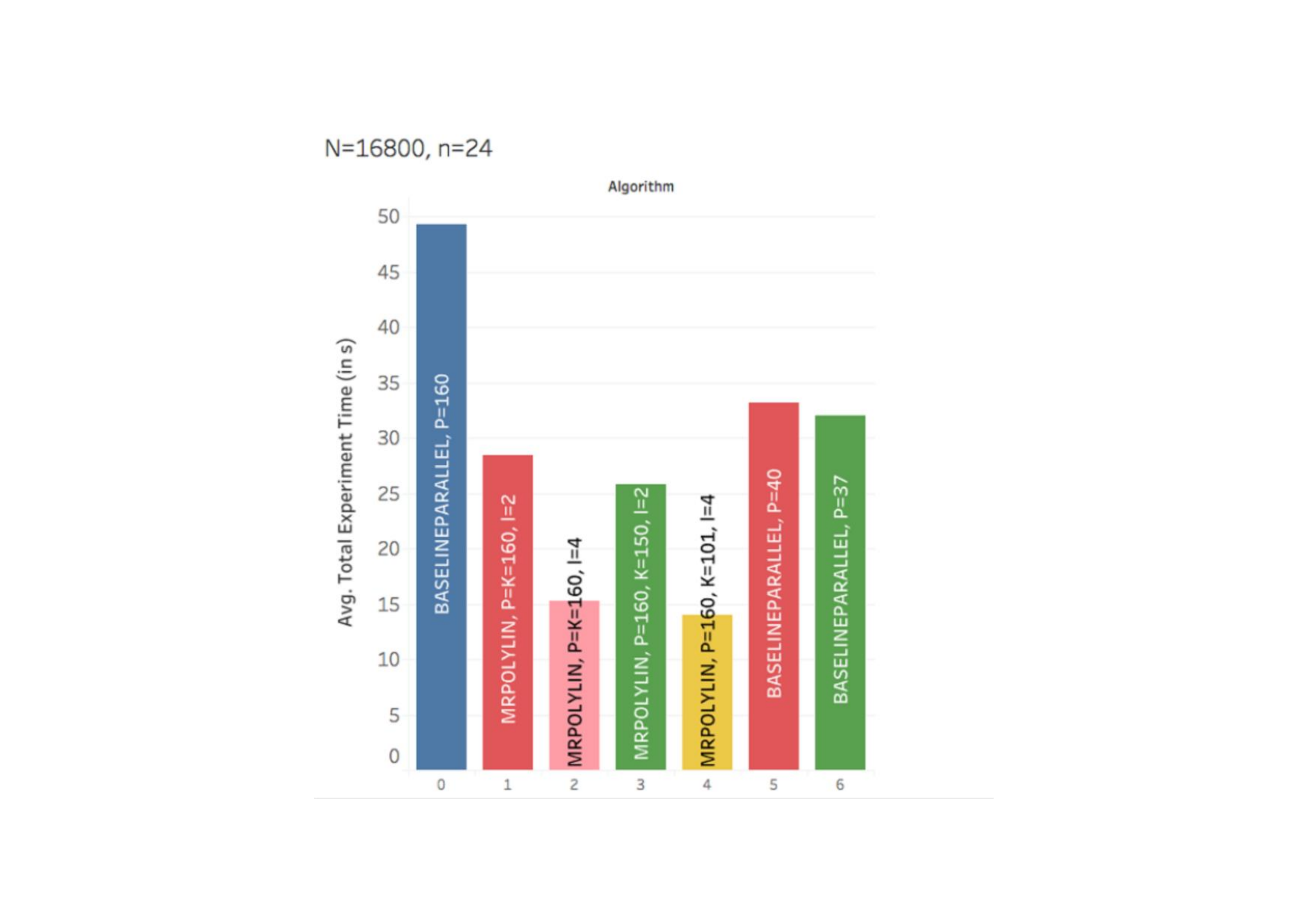}
  \end{center}
\vspace{-0.5 em}
\caption{Comparison of various algorithms.} \label{fig:4}\vspace{-0.5 em}
\end{figure}

We implemented our algorithms in a distributed computation prototyping framework built in Java and deployed on AWS EC2 cloud. We used a cluster of t2.medium instances and a random data matrix of dimension $\mathbf{A}_{168000\times 168000}$ for all the experiments. Note all the algorithms achieve same error associated with $n=24$ of communication of BaselineParallel algorithm using $P=160$. The total expectation time includes, communication and computational cost in addition to decoding cost. In Fig. \ref{fig:4} while algorithms 0, 5 and 6 indicate BaselineParallel using $P=160,40,37$ respectively, algorithms 1, 3 and 2, 4 indicate MRPolyLin with $\ell=12$ and $\ell=6$ rounds of communication with $P=160$. While algorithms 1 and 2 wait for all of the machines to complete their tasks (i.e., $K=P$), algorithms 3 and 4 wait for the fastest 110 and 101 machines to complete their task.  Algorithms with the same color have the same computational and storage costs. Comparing algorithms 0 to 4, observe that fewer communication rounds leads to faster completion time, despite computational load in our experiments. Compared to the baseline algorithms (5 and 6) with the same storage and computational complexity, our algorithms (1 and 3) achieve a speed up of around 20\%. Finally, comparing algorithms 1 $(K=P)$ and 3 ($P-K=10$) and algorithms 2 and 4 ($P-K=59$) we observe the speed up of 8\% at most over parallel schemes (where $K=P$).

\bibliographystyle{abbrv}
\bibliography{example_paper}

\begin{appendices}

\section{Brief explanation for Jacobi and Gradient Decent}\label{a}
\emph{ Jacobian Method for Square System:} For a square
matrix $\mathbf{M}$, we can write $\mathbf{M} = \mathbf{D}+\mathbf{L}$, where $\mathbf{D}$ is a diagonal matrix and $\mathbf{L}$ is a matrix whose diagonal entries are $0$. The Jacobian iteration is
 $\mathbf{x}^{(n+1)}=\mathbf{D}^{-1}(\mathbf{y}-\mathbf{L}\mathbf{x}^{(n)})$.
 For $\mathbf{D}$ and $\mathbf{L}$, the
 computation result converges to the true solution, $\mathbf{x}=\mathbf{M}^{-1}\mathbf{y}$.

 \emph{Gradient Descent Method:} For non-square matrices, the gradient descent solution has the form $\mathbf{x}^{(n+1)}=((1-\lambda)\mathbf{I}-\delta\mathbf{M}^T\mathbf{M})\mathbf{x}^{(n)}+\delta\mathbf{M}^T\mathbf{y}$
 where $\delta$ is an appropriate step-size.
\section{Error bound lemma proof}\label{b}
We derive a bound on the error of the iterative method as a function of number of iterations $n$. For simplicity, we assume $\mathbf{A}$ is diagonalizable and full rank\footnote{The case when $\mathbf{A}$ is not diagonalizable can also be analyzed with the Jordan decomposition.}. That is,
$\mathbf{e}^{(0)}=\alpha_1\mathbf{u}_1 + \alpha_2\mathbf{u}_2 + \dots + \alpha_N \mathbf{u}_N$, where $\mathbf{u}_1,\dots, \mathbf{u}_N$ are eigenvectors of matrix $\mathbf{A}$. Then,
$\mathbf{e}^{(1)}=\sigma_1\alpha_1\mathbf{u}_1 + \sigma_2\alpha_2\mathbf{u}_2 + \dots + \sigma_m\alpha_N \mathbf{u}_N$ where $\sigma_i$ are eigenvalues of $\mathbf{A}$ such that $|\sigma_N|\leq\dots\leq |\sigma_2|\leq |\sigma_1|<1$. Similarly, we have
$
\mathbf{e}^{(n)}=\mathbf{A}^n x^{(0)}=\sigma_1^n\alpha_1 \mathbf{u}_1 + \sigma_2^n\alpha_2\mathbf{u}_2 + \dots + \sigma_m^n\alpha_N \mathbf{u}_N
$.
   \begin{claim}[\textbf{Bound on error}]\label{}
   Letting $\epsilon(n)=\|\mathbf{e}^{(n)}\|$ and given that {$|\sigma_1|<1$}, the upper bound on the required number of iterations, denoted by ${n^*}$, is
   $$\log_{\frac{1}{\sigma_1}}\frac{rank(\mathbf{A})\underset{1\leq i\leq rank(\mathbf{A})}{\max}\alpha_i}{\epsilon}.$$
   \end{claim}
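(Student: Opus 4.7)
The plan is to exploit diagonalizability of $\mathbf{A}$ to decompose the error vector into eigencomponents, and then use the triangle inequality to bound the norm of $\mathbf{e}^{(n)}=\mathbf{A}^n\mathbf{e}^{(0)}$ in terms of powers of the largest-magnitude eigenvalue. First, I would recall from the setup that $\mathbf{e}^{(n)}=\mathbf{A}^n\mathbf{e}^{(0)}$, where $\mathbf{e}^{(0)}=\mathbf{x}^{(0)}-\mathbf{x}^*$ is the (fixed) initial error. Writing $\{\mathbf{u}_1,\dots,\mathbf{u}_N\}$ for an eigenbasis of $\mathbf{A}$ with corresponding eigenvalues $\sigma_1,\dots,\sigma_N$ ordered so that $|\sigma_1|\geq|\sigma_2|\geq\cdots\geq|\sigma_N|$, I would expand $\mathbf{e}^{(0)}=\sum_{i=1}^{N}\alpha_i\mathbf{u}_i$ (matching the $\alpha_i$ appearing in the lemma statement).

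Next, applying $\mathbf{A}^n$ term by term yields
\[
\mathbf{e}^{(n)}=\sum_{i=1}^{N}\sigma_i^n\alpha_i\mathbf{u}_i.
\]
Taking norms and invoking the triangle inequality together with the assumption (standard for eigenvectors) $\|\mathbf{u}_i\|\leq 1$ gives
\[
\|\mathbf{e}^{(n)}\|\;\leq\;\sum_{i=1}^{N}|\sigma_i|^n|\alpha_i|\;\leq\;N\,|\sigma_1|^n\,\max_{1\leq i\leq N}|\alpha_i|,
\]
where the last step uses $|\sigma_i|\leq|\sigma_1|$ for all $i$. To finish, I would simply enforce $N|\sigma_1|^n\max_i|\alpha_i|\leq\epsilon$; since $|\sigma_1|<1$ by hypothesis, taking logarithms (base $1/|\sigma_1|>1$, which preserves the inequality direction) gives exactly the stated bound $n\geq\log_{1/|\sigma_1|}\bigl(N\max_i|\alpha_i|/\epsilon\bigr)$.

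There is no real obstacle here; the argument is essentially a one-line spectral estimate. The only mildly delicate points to state carefully are (i) that we are using a normalized eigenbasis so that the $\|\mathbf{u}_i\|$ factors can be dropped, and (ii) the footnote's caveat that if $\mathbf{A}$ is not diagonalizable the same conclusion follows via the Jordan decomposition, at the cost of absorbing polynomial-in-$n$ factors from the Jordan blocks into the constant (which is asymptotically negligible compared to the geometric decay $|\sigma_1|^n$). The cleanest form is therefore the diagonalizable case, which matches the lemma's stated assumption.
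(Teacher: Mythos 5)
Your proposal is correct and follows essentially the same route as the paper's own proof: expand $\mathbf{e}^{(0)}$ in a (normalized) eigenbasis, apply $\mathbf{A}^n$ componentwise, use the triangle inequality and $|\sigma_i|\leq|\sigma_1|$ to obtain the bound $N|\sigma_1|^n\max_i|\alpha_i|$, and solve for $n$. Your explicit remarks on normalizing the eigenvectors and on the Jordan-block caveat are slightly more careful than the paper's writeup, but the argument is the same.
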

   \begin{proof}
   \begin{align}
   \epsilon(n)&=\|\sigma_1^{{n}}\alpha_1u_1 + \sigma_2^{{n}}\alpha_2u_2 + \dots + \sigma_m^{n}\alpha_m u_m\|\nonumber\\
   &\leq \|\sigma_1^{n}\alpha_1u_1 \|+\| \sigma_2^{n}\alpha_2u_2\| + \dots +\| \sigma_m^{n}\alpha_m u_m\|\nonumber\\
   &=|\sigma_1|^{n}|\alpha_1|+|\sigma_2|^{{n}}\alpha_2| + \dots + |\sigma_m|^{n}|\alpha_m|\nonumber\\
   &\stackrel{(a)}{\leq}m|\sigma_1|^{n}\underset{1\leq i\leq m}{\max}{|\alpha_i}|=\epsilon 
   \end{align}
   where in (a) we use the assumption $|\sigma_m|\leq\dots\leq|\sigma_2|<|\sigma_1|<1$ (due to assumption that eigenvalues of matrix $\mathbf{A}$ are strictly smaller than 1) in which $m=rank(\mathbf{A})$ then, if $n\geq{n^*}=\log_{\frac{1}{|\sigma_1|}}\frac{m\underset{1\leq i\leq m}{\max}\alpha_i}{\epsilon}$, we have $\epsilon(n)\leq \epsilon$.
   \end{proof}
  \section{Proof of Lemma \ref{claim:evensumdot}}\label{c}

We recall some relevant definitions of Section \ref{sec:Propertiesofmatrixpoly}. Split $\mathbf{A}_{N\times N}$ both vertically and horizontally and $\mathbf{I}_{N\times N}$ and  $\mathbf{Q}_{N\times N}$ only horizontally as following:
                    \begin{align}
                       \mathbf{A}&=\begin{bmatrix}
                                            \mathbf{A}_{10} & \mathbf{A}_{11} & \dots & \mathbf{A}_{1m-1}
                                            \end{bmatrix}=\begin{bmatrix}
                                                     \mathbf{A}_{20} ^T&
                                                     \mathbf{A}_{21}^T&
                                                     \ldots&
                                                     \mathbf{A}_{2\: m-1}^T 
                                                     \end{bmatrix}^T 
                                                     \\
                                  \mathbf{Q}&=\begin{bmatrix}
                                                       \mathbf{Q}_{20}^T & \mathbf{Q}^T_{21} & \dots & \mathbf{Q}^T_{2\:m-1}
                                                       \end{bmatrix}^T ,
                                                       \mathbf{I}_{N\times N}=[\mathbf{I}_0^T,\dots,\mathbf{I}_{m-1}^T]^T   
                                                       \end{align}
          Also recall the following matrix polynomials
          \begin{align}
          p_{\mathbf{A}_1}(x)=\sum_{j=0}^{m-1}\mathbf{A}_{1j}x^j,\:
          p_{\mathbf{A}_2}(x)=\sum_{j=0}^{m-1}\mathbf{A}_{2j}x^{m-1-j},
          p_{\mathbf{Q}_2}(x)=\sum_{j=0}^{m-1}\mathbf{Q}_{2j}x^{m-1-j}
          ,\:p_{\mathbf{I}}(x)=\sum_{j=0}^{m-1}\mathbf{I}_jx^j. 
          \end{align}
          Using the above polynomials, we define polynomials
          \begin{align}
          p_{\mathbf{C}}(x)\triangleq p_{\mathbf{A}_1}(x) p_{\mathbf{A}_2}(x),\quad p_{\mathbf{D}}(x)\triangleq p_{\mathbf{A}_1}(x)p_{\mathbf{Q}_2}(x).
          \end{align}
         Note that the dimensions of $p_{\mathbf{A}_1}(x)$ is $N\times\frac{N}{m}$ and the dimensions of $p_{\mathbf{A}_2}(x),p_{\mathbf{Q}_2}(x)$ and $p_{\mathbf{I}}(x)$ are all $\frac{N}{m}\times N$. The polynomial $p_{\mathbf{C}}(x)$ is used to construct partial results of $\mathbf{A}^2$ and $p_{\mathbf{D}}(x)$ is used to construct that of $\mathbf{AQ}$. Now, we restate Lemma \ref{claim:multipleMatdot} from the main draft as follows. Let for $l\geq 5$:
                    \begin{align}P^{(l,m)}(\xi) \triangleq\left\{ \begin{array}{ll}
                              \Pi_{i=\frac{l}{2}}^{2}p_{\mathbf{C}}(\xi^{m^{i-1}})p_{\mathbf{D}}(\xi) & \text{even}\:l,\\
                              p_{\mathbf{I}}(\xi^{m^{\lfloor\frac{l}{2}\rfloor}})\Pi_{i=\frac{l-1}{2}}^{2}p_{\mathbf{C}}(\xi^{m^{i-1}})p_{\mathbf{D}}(\xi)& \text{odd}\:l.
                              \end{array}\right.\label{eq:Pnm[x]}\end{align}

         To use these definitions, we need a variant of a Lemma from \cite{MatDot} which is stated as follows:

     \begin{lemma}[\cite{MatDot}]\label{claim:multipleMatdot}
            For $l\geq 4$ we have:
            \begin{itemize}
            \item[(i)] If $l$ is even, $\mathbf{A}^{l-1}\mathbf{Q}$ is the coefficient of $\xi^{m^{\frac{l}{2}}-1}$ in $\Pi_{i=\frac{l}{2}}^{2}p_{\mathbf{C}}(\xi^{m^{i-1}})p_{\mathbf{D}}(\xi)$.
            \item[(ii)] If $l$ is odd, $\mathbf{A}^{l-1}\mathbf{Q}$ is the coefficient of  $\xi^{m^{\frac{l+1}{2}}-1}$ in $p_{\mathbf{I}}(\xi^{m^{\frac{l-1}{2}}})\Pi_{i=\frac{l-1}{2}}^{2}p_{\mathbf{C}}(\xi^{m^{i-1}})p_{\mathbf{D}}(\xi)$.
            \end{itemize}
            \end{lemma}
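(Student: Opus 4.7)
The plan is to expand the matrix-polynomial products, read off the coefficient of the target monomial as a finite sum over exponent tuples, and collapse the sum via an elementary base-$m$ uniqueness argument. The atomic facts I would use are three ``MatDot identities'' that follow directly from the block splittings in (\ref{eq:splitAeven})--(\ref{eq:splitI}) together with the definitions in (\ref{eq:polI}): the coefficient of $x^{m-1}$ in $p_{\mathbf{C}}(x)=p_{\mathbf{A}_1}(x)p_{\mathbf{A}_2}(x)$ is $\sum_j\mathbf{A}_{1j}\mathbf{A}_{2j}=\mathbf{A}^2$; the coefficient of $x^{m-1}$ in $p_{\mathbf{D}}(x)=p_{\mathbf{A}_1}(x)p_{\mathbf{Q}_2}(x)$ is $\mathbf{AQ}$; and the coefficient of $x^{m-1}$ in $p_{\mathbf{I}}(x)p_{\mathbf{A}_2}(x)$ is $\sum_j\mathbf{I}_j\mathbf{A}_{2j}=\mathbf{A}$ via the complementary horizontal-strip / vertical-strip structure of the splits.

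For part (i) with $l$ even, expanding $\Pi_{i=l/2}^{2}p_{\mathbf{C}}(\xi^{m^{i-1}})\cdot p_{\mathbf{D}}(\xi)$ presents the coefficient of $\xi^{m^{l/2}-1}$ as
\[
\sum \mathbf{C}_{a_{l/2-1}}\cdots\mathbf{C}_{a_1}\mathbf{D}_{a_0},
\]
ranging over tuples $(a_0,\ldots,a_{l/2-1})\in\{0,\ldots,2m-2\}^{l/2}$ satisfying $\sum_i a_i m^i=m^{l/2}-1$. The combinatorial core is to prove that $(m-1,\ldots,m-1)$ is the unique such tuple; I would do this by induction on $l/2$. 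Reducing modulo $m$ yields $a_0\equiv -1\pmod m$, and only $a_0=m-1$ lies in the allowed range $\{0,\ldots,2m-2\}$ with this residue (the other candidate $2m-1$ is out of range). Dividing the residual equation by $m$ gives the inductive instance $\sum_{i\geq 1}a_i m^{i-1}=m^{l/2-1}-1$. After uniqueness, the coefficient collapses to $\mathbf{C}_{m-1}^{l/2-1}\mathbf{D}_{m-1}=(\mathbf{A}^2)^{l/2-1}\mathbf{AQ}=\mathbf{A}^{l-1}\mathbf{Q}$, proving (i).

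Part (ii) with $l$ odd follows the same template. The extra leftmost factor $p_{\mathbf{I}}(\xi^{m^{(l-1)/2}})$ contributes one additional digit $a_{(l-1)/2}\in\{0,\ldots,m-1\}$ (since $p_{\mathbf{I}}$ has degree $m-1$), and the target exponent $m^{(l+1)/2}-1$ again has all $(l+1)/2$ of its base-$m$ digits equal to $m-1$, so the same inductive uniqueness forces every $a_i=m-1$. Pairing $\mathbf{I}_{m-1}$ with the leading $p_{\mathbf{A}_2}$-factor implicit in the adjacent $p_{\mathbf{C}}$ (i.e., invoking the identity $\mathbf{A}=\sum_j\mathbf{I}_j\mathbf{A}_{2j}$) supplies the extra $\mathbf{A}$, giving $\mathbf{A}\cdot(\mathbf{A}^2)^{(l-3)/2}\mathbf{AQ}=\mathbf{A}^{l-1}\mathbf{Q}$. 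I expect the bookkeeping of block dimensions in this last pairing, since $p_{\mathbf{I}}$ and $p_{\mathbf{C}}$ have incompatible shapes if read naively, to be the only delicate step; the base-$m$ uniqueness argument is the clean combinatorial engine driving both cases.
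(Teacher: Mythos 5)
The paper does not actually prove this lemma---it imports it from \cite{MatDot} without proof---so your proposal has to stand on its own. Your part (i) does: expanding the product indexes the coefficient of $\xi^{m^{l/2}-1}$ by tuples $(a_0,\dots,a_{l/2-1})\in\{0,\dots,2m-2\}^{l/2}$ with $\sum_i a_i m^i=m^{l/2}-1$, and your mod-$m$ induction correctly isolates the all-$(m-1)$ tuple (the competing residue $2m-1$ is out of range), collapsing the coefficient to $(\mathbf{A}^2)^{l/2-1}\mathbf{A}\mathbf{Q}=\mathbf{A}^{l-1}\mathbf{Q}$.

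Part (ii) has a genuine gap, and it is not the benign bookkeeping you flag at the end. Your digit argument does force the top digit to equal $m-1$, but then the coefficient of $\xi^{m^{(l+1)/2}-1}$ in the expression as written is $\mathbf{I}_{m-1}\cdot\big(\mathbf{A}^{l-2}\mathbf{Q}\big)$: with $\mathbf{I}_{m-1}$ the $\frac{N}{m}\times N$ strip from (\ref{eq:splitI}) this is a horizontal slice of $\mathbf{A}^{l-2}\mathbf{Q}$, not $\mathbf{A}^{l-1}\mathbf{Q}$ (indeed the whole product is $\frac{N}{m}\times N$ and cannot have an $N\times N$ coefficient). The rescue you propose---pairing $\mathbf{I}_{m-1}$ with ``the $p_{\mathbf{A}_2}$ implicit in the adjacent $p_{\mathbf{C}}$''---cannot work: the half of $p_{\mathbf{C}}(\xi^{m^{(l-1)/2-1}})$ adjacent to $p_{\mathbf{I}}$ is $p_{\mathbf{A}_1}$, not $p_{\mathbf{A}_2}$; that factor lives at scale $m^{(l-1)/2-1}$ while $p_{\mathbf{I}}$ lives at scale $m^{(l-1)/2}$, so the complementary-exponent cancellation $\sum_j(\cdot)_j(\cdot)_j$ never forms; and all $(l-3)/2$ copies of $p_{\mathbf{C}}$ are already consumed producing $(\mathbf{A}^2)^{(l-3)/2}$, leaving no spare $p_{\mathbf{A}_2}$. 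The real issue is that the stated formula for odd $l$ is missing a factor: it should be $p_{\mathbf{I}}(\xi^{m^{(l-1)/2}})\,p_{\mathbf{A}_2}(\xi^{m^{(l-1)/2}})\,\Pi_{i=\frac{l-1}{2}}^{2}p_{\mathbf{C}}(\xi^{m^{i-1}})\,p_{\mathbf{D}}(\xi)$, with $p_{\mathbf{I}}$ built from the \emph{vertical} split of $\mathbf{I}$ so that $\sum_j\mathbf{I}_j\mathbf{A}_{2j}=\mathbf{A}$ is well defined---exactly the pattern of the base-case term $p_{\mathbf{I}}(\xi^{m})p_{\mathbf{A}_2}(\xi^{m})p_{\mathbf{D}}(\xi)$ already present in (\ref{eq:final-interpolation}). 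With that correction the leading pair contributes $\mathbf{A}$ at exponent $(m-1)m^{(l-1)/2}$, your uniqueness argument applies verbatim at the block level (a degree-$(2m-2)$ polynomial in $\xi^{m^{(l-1)/2}}$ times a degree-$(2m^{(l-1)/2}-2)$ polynomial in $\xi$), and the coefficient becomes $\mathbf{A}\cdot(\mathbf{A}^2)^{(l-3)/2}\cdot\mathbf{A}\mathbf{Q}=\mathbf{A}^{l-1}\mathbf{Q}$ as claimed.
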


         Since $P^{(l,m)}(\xi)$ in (\ref{eq:Pnm[x]}) is a polynomial of $\xi$, it can be written as
             $P^{(l,m)}(\xi)=\sum_{i=0}^{2m^{\lceil\frac{l}{2}\rceil}-2}q_i^{(l)}\xi^i$. Then, the coefficient of $q^{(n)}_{m^{\lceil\frac{l}{2}\rceil}-1}$ is $\mathbf{A}^{l-1}\mathbf{Q}$ because of Lemma \ref{claim:multipleMatdot}. Also note that for even $l$, the degree of $P^{(l,m)}(\xi)$ is equal to the degree of $P^{(l-1,m)}(\xi)$ because we have put the extra term $p_{\mathbf{I}}(\xi^{m^{\lfloor\frac{l}{2}\rfloor}})$ in \eqref{eq:Pnm[x]} to make $P^{(l,m)}(\xi)$ and $P^{(l-1,m)}(\xi)$ match in degree. This property will be useful later in the proof. Based on Lemma \ref{claim:multipleMatdot} we have:
  \begin{corollary}\label{corol:pairwise}
  For even $l$, $\mathbf{A}^{l-1}\mathbf{Q}+\mathbf{A}^{l-2}\mathbf{Q}$ is the co-efficient of $\xi^{m^{\frac{l}{2}}-1}$ in $$\Pi_{i=\frac{l}{2}}^{2}p_{\mathbf{C}}(\xi^{m^{i-1}})p_{\mathbf{D}}(\xi) +p_{\mathbf{I}}(\xi^{m^{\lfloor\frac{l}{2}\rfloor}})\Pi_{i=\frac{l-1}{2}}^{2}p_{\mathbf{C}}(\xi^{m^{i-1}})p_{\mathbf{D}}(\xi).$$
  \end{corollary}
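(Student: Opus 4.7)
The plan is to identify the two summands on the right-hand side with $P^{(l,m)}(\xi)$ and $P^{(l-1,m)}(\xi)$ respectively, invoke Lemma~\ref{claim:multipleMatdot} separately on each, and then combine by the linearity of coefficient extraction. For the given even $l$, the first summand $\Pi_{i=l/2}^{2} p_{\mathbf{C}}(\xi^{m^{i-1}})p_{\mathbf{D}}(\xi)$ is exactly $P^{(l,m)}(\xi)$ from the even case of the definition (\ref{eq:Pnm[x]}), while the second summand matches $P^{(l-1,m)}(\xi)$ (up to the minor index notation), using the odd case of (\ref{eq:Pnm[x]}) with $l-1$ in place of $l$.

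The first step is to apply Lemma~\ref{claim:multipleMatdot}(i) with the given even $l$: this yields that $\mathbf{A}^{l-1}\mathbf{Q}$ is the coefficient of $\xi^{m^{l/2}-1}$ in $P^{(l,m)}(\xi)$. The second step is to apply Lemma~\ref{claim:multipleMatdot}(ii) with the odd integer $l' = l-1$: setting $l'$ into the odd-case formula gives that $\mathbf{A}^{l'-1}\mathbf{Q} = \mathbf{A}^{l-2}\mathbf{Q}$ is the coefficient of $\xi^{m^{(l'+1)/2}-1} = \xi^{m^{l/2}-1}$ in $P^{(l-1,m)}(\xi)$. The crucial observation that makes the whole argument go through is that the two polynomials have their respective target matrices located at the \emph{same} power of $\xi$, namely $\xi^{m^{l/2}-1}$. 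This is precisely why the extra factor $p_{\mathbf{I}}(\cdot)$ was inserted in the odd case of (\ref{eq:Pnm[x]}): it shifts the relevant coefficient of $P^{(l-1,m)}(\xi)$ upward so as to align with that of $P^{(l,m)}(\xi)$, as explicitly remarked in the paragraph immediately preceding the corollary.

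Once both coefficient identifications are in hand, the conclusion follows immediately from the linearity of coefficient extraction: the coefficient of $\xi^{m^{l/2}-1}$ in the sum $P^{(l,m)}(\xi) + P^{(l-1,m)}(\xi)$ equals the sum of the individual coefficients, which is $\mathbf{A}^{l-1}\mathbf{Q} + \mathbf{A}^{l-2}\mathbf{Q}$, as claimed. There is no substantive technical obstacle; the proof is essentially a bookkeeping argument leveraging the deliberate degree-matching built into the definition of $P^{(l,m)}$. The only care required is in the index manipulation, specifically verifying that $\lfloor (l-1)/2\rfloor = l/2 - 1$ for even $l$, so that the $p_{\mathbf{I}}$ factor in $P^{(l-1,m)}$ compensates for having one fewer $p_{\mathbf{C}}$ factor than in $P^{(l,m)}$, producing matched overall degrees $2m^{l/2}-2$ in both summands.
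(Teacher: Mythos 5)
Your proposal is correct and follows essentially the same route as the paper, which likewise obtains the corollary by applying Lemma~\ref{claim:multipleMatdot}(i) to the even index $l$ and Lemma~\ref{claim:multipleMatdot}(ii) to $l-1$, observing that both place their respective matrices at the coefficient of $\xi^{m^{l/2}-1}$, and adding. (Your closing remark that both summands have degree exactly $2m^{l/2}-2$ mirrors a loose claim in the paper but is not needed for the corollary, which concerns only that single coefficient.)
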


          \begin{lemma}\label{claim:summatdot}
            For even $n$ and even $t$ such that $4\leq t< n$ and any vector $\mathbf{y}$, $\mathbf{A}^{t-1}\mathbf{Q}\mathbf{y}+\mathbf{A}^{t-2}\mathbf{Q}\mathbf{y}$ is the coefficient of $\xi^{m^{\frac{n}{2}}-1}$ in $\xi^{m^{\frac{n}{2}}-m^{\frac{t}{2}}}\big(P^{(t,m)}(\xi)+P^{(t-1,m)}(\xi)\big)\mathbf{y}$, and also  $\deg(\xi^{m^{\frac{n}{2}}-m^{\frac{t}{2}}}\big(P^{(t,m)}(\xi)+P^{(t-1,m)}(\xi)\big))=m^{\frac{n}{2}}+m^{\frac{t}{2}}-2$.
            \end{lemma}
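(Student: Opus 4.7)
The plan is to reduce the statement directly to Corollary \ref{corol:pairwise} by a simple exponent-shift argument, and then to track the degree. Since $t$ is even and $4 \leq t < n$, both $t$ and $t-1 \geq 3$ lie in the range where Lemma \ref{claim:multipleMatdot} and Corollary \ref{corol:pairwise} apply, and the shift amount $m^{n/2}-m^{t/2}$ is strictly positive.

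First, I would invoke Corollary \ref{corol:pairwise} with $l = t$ to conclude that $\mathbf{A}^{t-1}\mathbf{Q} + \mathbf{A}^{t-2}\mathbf{Q}$ equals the coefficient of $\xi^{m^{t/2}-1}$ in the matrix polynomial $P^{(t,m)}(\xi) + P^{(t-1,m)}(\xi)$. Right-multiplying both sides by the (constant-in-$\xi$) vector $\mathbf{y}$ acts coefficient-wise, so $(\mathbf{A}^{t-1}\mathbf{Q}+\mathbf{A}^{t-2}\mathbf{Q})\mathbf{y}$ is the coefficient of $\xi^{m^{t/2}-1}$ in the vector polynomial $\bigl(P^{(t,m)}(\xi)+P^{(t-1,m)}(\xi)\bigr)\mathbf{y}$.

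Next, multiplying a polynomial by $\xi^{m^{n/2}-m^{t/2}}$ is a uniform shift of exponents: the coefficient of $\xi^k$ in $f(\xi)$ appears as the coefficient of $\xi^{k + m^{n/2}-m^{t/2}}$ in $\xi^{m^{n/2}-m^{t/2}} f(\xi)$. Applied with $k = m^{t/2}-1$, we get $k + (m^{n/2}-m^{t/2}) = m^{n/2}-1$, which gives the first assertion of the lemma.

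For the degree claim, I would use the structure of \eqref{eq:Pnm[x]}: each factor $p_{\mathbf{C}}(\xi^{m^{i-1}})$ contributes $\xi$-degree $2(m-1)m^{i-1}$ and $p_{\mathbf{D}}(\xi)$ contributes degree $2(m-1)$, so $\deg P^{(t,m)} = 2(m-1)(1 + m + \cdots + m^{t/2-1}) = 2(m^{t/2}-1)$. The extra factor $p_{\mathbf{I}}(\xi^{m^{\lfloor(t-1)/2\rfloor}}) = p_{\mathbf{I}}(\xi^{m^{t/2-1}})$ in $P^{(t-1,m)}$ contributes degree $(m-1)m^{t/2-1}$, which is exactly what is needed to bring $\deg P^{(t-1,m)}$ up to $2(m^{t/2}-1)$ as well, as already noted right after Lemma \ref{claim:multipleMatdot}. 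Therefore $\deg(P^{(t,m)}+P^{(t-1,m)}) \leq 2m^{t/2}-2$, and after multiplying by $\xi^{m^{n/2}-m^{t/2}}$ the degree becomes $m^{n/2}+m^{t/2}-2$.

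The main (minor) obstacle is justifying that the degree is exactly $2m^{t/2}-2$ rather than smaller: one needs that the leading monomials of $P^{(t,m)}$ and $P^{(t-1,m)}$ do not cancel. This follows because the top-degree coefficient of $P^{(t,m)}(\xi)$ involves products of the top blocks $\mathbf{A}_{1\,m-1},\mathbf{A}_{2\,0},\mathbf{Q}_{2\,0}$, whereas the top-degree coefficient of $P^{(t-1,m)}(\xi)$ acquires an additional $\mathbf{I}_{m-1}$ factor, so the two leading matrices are generically independent and cannot cancel; in any case, only the upper bound on the degree is needed for how this lemma is used elsewhere, so even if cancellation occurred it would not affect the coefficient at $\xi^{m^{n/2}-1}$ identified in the first part.
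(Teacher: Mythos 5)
Your proof is correct and follows essentially the same route as the paper's: invoke Corollary \ref{corol:pairwise} with $l=t$, apply $\mathbf{y}$ coefficient-wise, and shift every exponent by $m^{n/2}-m^{t/2}$ so that the coefficient at $\xi^{m^{t/2}-1}$ lands at $\xi^{m^{n/2}-1}$. One bookkeeping slip (which echoes a misleading remark in the paper itself): $\deg P^{(t-1,m)}$ is not $2m^{t/2}-2$ but $(m-1)m^{t/2-1}+\bigl(2m^{t/2-1}-2\bigr)=m^{t/2}+m^{t/2-1}-2$, which is strictly smaller than $\deg P^{(t,m)}=2m^{t/2}-2$ for $m\geq 2$; the $p_{\mathbf{I}}$ factor aligns the \emph{position} of the $\mathbf{A}^{t-2}\mathbf{Q}$ coefficient at $\xi^{m^{t/2}-1}$, not the total degree. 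This actually simplifies your last paragraph: since the two summands have different degrees, the leading term of $P^{(t,m)}$ cannot be cancelled, so the degree of the sum is exactly $2m^{t/2}-2$ and your cancellation worry is moot.
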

            \begin{proof}
             From Lemma \ref{claim:multipleMatdot} and Corollary \ref{corol:pairwise} we know that $\mathbf{A}^{t-1}\mathbf{Q}+\mathbf{A}^{t-2}\mathbf{Q}$ is a co-efficient of $P^{(t,m)}(\xi)+P^{(t-1,m)}(\xi)$ and since $n$ and $t$ are even, we have
             \begin{align*}
             & \xi^{m^{\frac{n}{2}}-m^{\frac{t}{2}}} \big(P^{(t,m)}(\xi)+P^{(t-1,m)}(\xi)\big)\mathbf{y}\\
             &\quad=\xi^{m^{\frac{n}{2}}-m^{\frac{t}{2}}}(\sum_{i=0}^{m^{\frac{t}{2}}-2}q_i^{(t)}\xi^i\mathbf{y}+\big(\mathbf{A}^{t-1}\mathbf{Q}+\mathbf{A}^{t-2}\mathbf{Q}\big)\mathbf{y}\xi^{m^{\frac{t}{2}}-1}+\sum_{i={m^{\frac{t}{2}}}}^{2m^{\frac{t}{2}}-2}(q_i^{(t)}\xi^i)\mathbf{y})\\
             &\quad=\sum_{i=m^{\frac{n}{2}}-m^{\frac{t}{2}}}^{m^{\frac{n}{2}}-1}(q_i^{(t)}\mathbf{y})\xi^i+\big(\mathbf{A}^{t-1}\mathbf{Q}\mathbf{y}+\mathbf{A}^{t-2}\mathbf{Q}\mathbf{y}\big)\xi^{m^{\frac{n}{2}}-1}+\sum_{i={m^{\frac{n}{2}}}}^{m^{\frac{n}{2}}+m^{\frac{t}{2}}-2}(q_i^{(t)}\mathbf{y})\xi^i
             \end{align*}
             because $m^{\frac{n}{2}}+m^{\frac{t}{2}}-2<2m^{\frac{n}{2}}-2$, $\deg(P^{(n,m)}(\xi))=\deg(P^{(n,m)}(\xi)+\xi^{m^{\frac{n}{2}}-m^{\frac{t}{2}}}\big(P^{(t,m)}(\xi)+P^{(t-1,m)}(\xi)\big))$ and therefore $\mathbf{A}^{t-2}\mathbf{Q}\mathbf{y}+\mathbf{A}^{t-1}\mathbf{Q}\mathbf{y}+\mathbf{A}^{n-1}\mathbf{Q}\mathbf{y}$ is the coefficient of $\xi^{m^{\frac{n}{2}}-1}$ in $P^{(n,m)}(\xi)\mathbf{y}_1+\xi^{m^{\frac{n}{2}}-m^{\lceil\frac{t}{2}\rceil}}\big(P^{(t,m)}(\xi)+P^{(t-1,m)}(\xi)\big)\mathbf{y}_2$.
             \end{proof}

   The proof of Lemma \ref{claim:evensumdot}  is based on the multiple applications of Lemma \ref{claim:multipleMatdot} combined with the  Lemma \ref{claim:summatdot}. For the sake of readability, we restate Lemma \ref{claim:evensumdot} from the main draft as follows:
   \begin{lemma} 
      For even $n$, $\mathbf{A}^{n}\mathbf{x}^{(0)}+\sum_{i=1}^{n}\big(\mathbf{A}^{i-1}\mathbf{Q}\big)\mathbf{y}$ is the coefficient of $\xi^{m^{\frac{n}{2}}-1}$ in the degree $2m^{\frac{n}{2}}-2$ polynomial
     \begin{align}
     &\eta(\xi,n)=\Pi_{i=\frac{n}{2}}^{1} p_{\mathbf{C}}(\xi^{m^{i-1}})\mathbf{x}^{(0)}+\sum_{i=5}^{n}\xi^{m^{\frac{n}{2}}-m^{\lceil\frac{i}{2}\rceil}} P^{(i,m)}(\xi)\mathbf{y}\nonumber+ \xi^{m^{\frac{n}{2}}-m^2}\big(p_{\mathbf{C}}(\xi^m)+p_{\mathbf{I}}(\xi^m) p_{\mathbf{A}_2}(\xi^m)\big)p_{\mathbf{D}}(\xi)\mathbf{y}\nonumber\\&\quad+\xi^{m^{\frac{n}{2}}-m}(p_{\mathbf{D}}(\xi)+p_{\mathbf{I}}(\xi)p_{\mathbf{Q}_2}(\xi))\mathbf{y}. 
     \end{align}
     \end{lemma}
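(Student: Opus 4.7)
The plan is to match $\eta(\xi,n)$ summand-by-summand with the natural grouping of the target expression $\mathbf{A}^n\mathbf{x}^{(0)}+\sum_{i=1}^{n}\mathbf{A}^{i-1}\mathbf{Q}\mathbf{y}$. I split the target into (i) the $\mathbf{x}^{(0)}$ contribution $\mathbf{A}^n\mathbf{x}^{(0)}$, (ii) the $i=1,2$ pair $(\mathbf{I}+\mathbf{A})\mathbf{Q}\mathbf{y}$, (iii) the $i=3,4$ pair $(\mathbf{A}^2+\mathbf{A}^3)\mathbf{Q}\mathbf{y}$, and (iv) the individual terms $\mathbf{A}^{i-1}\mathbf{Q}\mathbf{y}$ for $5\le i\le n$. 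These four groups correspond, in order, to the four summands of $\eta(\xi,n)$ as written. It suffices to show that each summand contributes exactly its designated matrix to the coefficient of $\xi^{m^{n/2}-1}$, plus a uniform degree check at the end.

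For piece (i) I would iterate the basic identity $\mathbf{A}^2=[\xi^{m-1}]\,p_{\mathbf{C}}(\xi)$ through the telescoping product using the ``base-$m$'' scaling $\xi\mapsto\xi^{m^{i-1}}$ and induction on $n/2$, obtaining $\mathbf{A}^n=[\xi^{m^{n/2}-1}]\,\Pi_{i=n/2}^{1}p_{\mathbf{C}}(\xi^{m^{i-1}})$; this is essentially the $\mathbf{Q}=\mathbf{I}$ specialization of Lemma~\ref{claim:multipleMatdot}. For piece (iv), each shifted summand $\xi^{m^{n/2}-m^{\lceil i/2\rceil}}P^{(i,m)}(\xi)\mathbf{y}$ carries $\mathbf{A}^{i-1}\mathbf{Q}\mathbf{y}$ at $\xi^{m^{n/2}-1}$ directly from Lemma~\ref{claim:multipleMatdot} (which locates $\mathbf{A}^{i-1}\mathbf{Q}$ at $\xi^{m^{\lceil i/2\rceil}-1}$ in $P^{(i,m)}$), and the shifted-polynomial degrees $m^{n/2}+m^{\lceil i/2\rceil}-2$ stay $\le 2m^{n/2}-2$ for $i\le n$, with equality only at $i=n$. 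For piece (ii), a direct computation gives $[\xi^{m-1}]\,p_{\mathbf{D}}(\xi)=\sum_j\mathbf{A}_{1j}\mathbf{Q}_{2j}=\mathbf{A}\mathbf{Q}$ and $[\xi^{m-1}]\,p_{\mathbf{I}}(\xi)p_{\mathbf{Q}_2}(\xi)=\sum_j\mathbf{I}_j\mathbf{Q}_{2j}=\mathbf{Q}$, so after the $\xi^{m^{n/2}-m}$ shift the summand delivers $\mathbf{A}\mathbf{Q}\mathbf{y}+\mathbf{Q}\mathbf{y}$ at $\xi^{m^{n/2}-1}$.

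Piece (iii) is the delicate step and the main obstacle, since the companion term $p_{\mathbf{I}}(\xi^m)p_{\mathbf{A}_2}(\xi^m)p_{\mathbf{D}}(\xi)$ is not literally of the form $P^{(3,m)}(\xi)$. I would handle it by viewing $p_{\mathbf{I}}(y)p_{\mathbf{A}_2}(y)$ as a polynomial in $y$ whose $y^{m-1}$-coefficient is $\sum_j\mathbf{I}_j\mathbf{A}_{2j}=\mathbf{A}$, substituting $y=\xi^m$, and then convolving with $p_{\mathbf{D}}(\xi)$, whose $\xi^{m-1}$-coefficient is $\mathbf{A}\mathbf{Q}$. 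Because the first factor carries only powers of $\xi$ that are multiples of $m$, the only pair of exponents summing to $m^2-1$ with both factors in range is $(m(m-1),\,m-1)$, which yields the single contribution $\mathbf{A}\cdot\mathbf{A}\mathbf{Q}=\mathbf{A}^2\mathbf{Q}$ at $\xi^{m^2-1}$. Combined with $p_{\mathbf{C}}(\xi^m)p_{\mathbf{D}}(\xi)=P^{(4,m)}(\xi)$, whose $\xi^{m^2-1}$-coefficient is $\mathbf{A}^3\mathbf{Q}$ by Lemma~\ref{claim:multipleMatdot}, the $\xi^{m^{n/2}-m^2}$ shift turns piece (iii) into $\mathbf{A}^2\mathbf{Q}\mathbf{y}+\mathbf{A}^3\mathbf{Q}\mathbf{y}$ at $\xi^{m^{n/2}-1}$. (Corollary~\ref{corol:pairwise} offers an alternative route once one recognizes the companion term as the degree-matching $P^{(3,m)}$-analogue.) Summing pieces (i)--(iv) produces the claimed coefficient identity, and the degree bound $2m^{n/2}-2$ is saturated by piece (i) and the $i=n$ instance of piece (iv).
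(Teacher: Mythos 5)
Your proof is correct and follows essentially the same route as the paper: decompose the target into the $\mathbf{x}^{(0)}$ term plus groups of $\mathbf{A}^{i-1}\mathbf{Q}\mathbf{y}$ terms, extract each designated coefficient of $\xi^{m^{n/2}-1}$ via Lemma~\ref{claim:multipleMatdot}, and check the degree bound (the paper merely packages the $i\ge 5$ terms into even--odd pairs through Lemma~\ref{claim:summatdot} and Corollary~\ref{corol:pairwise}, which you instead handle term-by-term --- an equivalent bookkeeping choice). Your explicit convolution argument for the companion term $p_{\mathbf{I}}(\xi^m)p_{\mathbf{A}_2}(\xi^m)p_{\mathbf{D}}(\xi)$ covering $i=3,4$, and your direct computation for the $i=1,2$ term, are in fact more careful than the paper's own proof, which leaves those boundary cases implicit in its ``$\dots$''.
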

 \begin{proof}
    Since \begin{align*}
    \mathbf{A}^{n}\mathbf{x}^{(0)}+\sum_{i=1}^{n}\big(\mathbf{A}^{i-1}\mathbf{Q}\big)\mathbf{y}=\mathbf{A}^{n}\mathbf{x}^{(0)}+\underbrace{(\mathbf{A}^{n-1}\mathbf{Q}y+\mathbf{I}\mathbf{A}^{n-2}\mathbf{Q}y)}_{\mathbf{G}_{\frac{n}{2}}}+\underbrace{(\mathbf{A}^{n-3}\mathbf{Q}y+\mathbf{I}\mathbf{A}^{n-4}\mathbf{Q}y)}_{\mathbf{G}_{\frac{n}{2}-1}}+\dots+\underbrace{(\mathbf{A}\mathbf{Q}y+\mathbf{I}\mathbf{Q}y)}_{\mathbf{G}_1}
    \end{align*}
   $\mathbf{G}_1,\dots,\mathbf{G}_{\frac{n}{2}-1},\mathbf{G}_{\frac{n}{2}}, \mathbf{A}^n\mathbf{x}^{(0)}$ are the co-efficient of $\xi^{m^{\frac{n}{2}}-1}$ in $\xi^{m^{\frac{n}{2}}-m}(p_{\mathbf{D}}(\xi)+p_{\mathbf{I}}(\xi)p_{\mathbf{Q}_2}(\xi))\mathbf{y},\dots,\xi^{m^{\frac{n}{2}}-m^{\lceil\frac{n-2}{2}\rceil}} \big(P^{(n-2,m)}(\xi)+P^{(n-3,m)}(\xi)\big)\mathbf{y}, \big(P^{(n,m)}(\xi)+P^{(n-1,m)}(\xi)\big)\mathbf{y}$ and $\Pi_{i=\frac{n}{2}}^{1} p_{\mathbf{C}}(\xi^{m^{i-1}})\mathbf{x}^{(0)}$ respectively by applying Lemma \ref{claim:summatdot} and Corollary \ref{corol:pairwise}. This completes the proof.
\end{proof}
\section{Complexity analysis of PolyLin algorithm}\label{d}
\emph{Communication complexity}: In PolyLin algorithm there is only one round of communication. In this round, the master node first sends $\mathbf{x}^{(0)}$ to each worker. Then, each worker sends a vector of dimension $N$,  $\mathbf{r}^{(n)}(\xi_l)+\mathbf{s}^{(n)}(\xi_l)$ to the master node. Therefore, communication cost per processing node is $\beta_1+2N\beta_2$.

     \emph{Storage cost}: Each worker stores $p_{\mathbf{Q}_2}(\xi_l)$,  $p_{\mathbf{I}}(\xi_l^{m^{i-1}})$,  $p_{\mathbf{A}_1}(\xi_l^{m^{i-1}})$ and $p_{\mathbf{A}_2}(\xi_l^{m^{i-1}})$ for $1\leq i\leq { \frac{{n}} {2}}$. Exploiting the sparsity of the identity matrix, the storage cost  corresponding to $p_{\mathbf{I}}(\xi_l^{m^{i-1}})$ is $O(N)$.

      Therefore, overall storage cost is $2(\frac{n}{2}\frac{N^2}{m})+\frac{N^2}{m}+N=O((n+1)\frac{N^2}{m})=O((n+1){N^2}/{(\frac{P+1}{2})}^{\frac{2}{n}})$.

     \emph{Computational complexity}: Computation cost of $i$-th worker can be summarized as follows:
     \begin{itemize}
     \item Computation of $\mathbf{r}^{(n)}$ which involves $n$ matrix-vector multiplications $n\frac{N^2}{m}=nN^2/(\frac{P+1}{2})^{\frac{2}{n}}$.
     \item Computation of $\mathbf{s}^{(n)}$ involves: \begin{itemize}
     \item[1)] $n$ matrix-vector multiplications due to computing $\mathbf{w}^{(i)}$ at each iteration, equivalent to $n\frac{N^2}{m}$ operations.
         \item[2)] Because of sparsity of $p_{\mathbf{I}}(\xi)$ which has $N$ non-zero entries, computing $p_{\mathbf{I}}(\xi)\mathbf{{w}}^{(i)}$ in odd iterations requires $\frac{n}{2}N$ operations,
     \end{itemize}
     \end{itemize}
     \vspace{-1em}
     As a consequence, the overall computation complexity is 
     $O({n}\frac{N^2}{m})=O(nN^2/{(\frac{P+1}{2})}^{\frac{2}{n}})$.

\emph{Pre-processing cost}: Preprocessing cost is due to computing evaluations of  $p_{\mathbf{A}_i}(\xi), p_{\mathbf{Q}_2}(\xi),p_{\mathbf{I}}(\xi)$ which is $O(N^2)$. Therefore, overall pre-processing complexity is $O(nPN^2)$.

\emph{Post-processing cost}: After $n$ iterations we need to recover a vector, $\mathbf{A}^n\mathbf{x}^{(0)}+\mathbf{A}^{n-1}\mathbf{Qy}+\dots+\mathbf{Qy}$, with $N$ elements. Therefore, post-processing requires interpolating a $P=2m^{\frac{n}{2}} -1$ polynomials of degree $P-1$. As a consequence of Corollary 1 in main draft, complexity per vector element is $O(P \log^2 P\log(\log P))$ and overall  complexity is $O(NP \log^2 P\log(\log P))$.
\section{Table of comparison}\label{e}
Here we report the performance of competing schemes from Table 1 of \cite{lee2017distributed}. Note that in this table for algorithms DISCO, DANE, COCOA$^+$, AccelGrad, DSVRG and DASVRG. {Note that we do not report the coefficient of $\beta_2$. We refer the reader to the cited papers for more detail.}

\begin{table*}[h]
\centering
\caption{Comparison of different parallelizing schemes with different measures. In this table, $Q=O(\frac{N^2}{P}\sqrt{\kappa}\log{\frac{1}{\epsilon}})$ \cite{lee2017distributed} with $n=\log_{\frac{1}{|\sigma_1|}}\frac{N\underset{1\leq i\leq N}{\max}|\alpha_i|}{\epsilon}$ and $n^*=n\log{\frac{1}{|\sigma_1|}}-\log (N\max|\alpha_i|)$.}
\resizebox{1.1\linewidth}{!}{
\begin{tabular}{c|cccc}
  \hline
  Strategy & Computational cost & Pre/Post-processing cost & Storage cost & Communication cost\\
  \hline
  \\
  {DISCO} & {$Qn^*(1+\frac{P^{0.25}\sqrt{\kappa}}{N^{0.25}})$} & $O(N)\:,\:O(nN)$ & $O(\frac{N^2}{P})$ & {$n^*\beta_1(1+\frac{P^{0.25}\sqrt{\kappa}}{N^{0.25}}))$} \\
\hline
  \\
   {DANE} &  {$ Qn^*(1+\frac{P\kappa^2}{N})$} & $O(N)\:,\:O(nN)$ & $O(\frac{N^2}{P})$ & {$n^*\beta_1(1+\frac{P\kappa^2}{N})$} \\
  \hline
    \\
    {COCOA$^+$} & {$O(n^*\frac{N^2}{P}+n^*N\sqrt{\kappa\frac{N}{P}})$} & $O(N)\:,\:O(nN)$ & $O(\frac{N^2}{P})$ & {$n^*\beta_1\kappa$} \\
      \hline
        \\
    {$Accel Grad$} &  {$O(n^*\frac{N^2}{P}\sqrt{\kappa_f})$} & $O(N)\:,\:O(nN)$ & $O(\frac{N^2}{P})$ &  {$n^*\beta_1\sqrt{\kappa_f}$}
    \\
            \hline
            \\
{DSVRG} & {$O(n^*\frac{N^2}{P}+n^*N\kappa)$} & $O(N)\:,\:O(nN)$ & $O(\frac{N^2}{P})$ & {$n^*\beta_1(1+\frac{P\kappa}{N})$}
\\
            \hline
            \\
{DASVRG} & {$O(n^*\frac{N^2}{P}+n^*N\sqrt{\frac{N}{P}\kappa})$} & $O(N)\:,\:O(nN)$ & $O(\frac{N^2}{P})$ & {$n^*\beta_1(1+\sqrt{\frac{P\kappa}{N}})\log(1+\frac{P\kappa}{N})$}
            \\
                        \hline
                        \\
 {Karakus \cite{karakus2017straggler}} & {$O(\frac{nN^2}{K}),K\leq P$} &  {$O(\frac{PN^3}{K}),O(nN)$} & {$O(\frac{N^2}{K})$}& {$\beta_1n+\beta_2n(2N)$}
                                    \\
                                                \hline
                                                \\
 BaselineParallel & $O(\frac{nN^2}{P})$ & $O(N)\:,\:O(nN)$ &$O(\frac{N^2}{P})$ & $\beta_1(n)+\beta_2(n\times\frac{1+P}{P}\times N)$ \\
 \hline
 \\
  PolyLin & $O({nN^2}/{{(\frac{K+1}{2})}^{\frac{2}{n}}}),K\leq P$ & $O(nPN^2) \:,\: O( NK\log^2K\log(\log K))$ & $O((n+1){N^2}/{(\frac{K+1}{2})}^{\frac{2}{n}})$ & $\beta_1+\beta_2(2N)$
  \\
  \hline
  \\
  MRPolyLin & $O({nN^2}/{{(\frac{K+1}{2})}^{\frac{2\ell}{n}}}),K\leq P$ & $O(\frac{n}{\ell}PN^2) \:,\:{O}(\ell NK\log^2K\log(\log K))$ & $O((\frac{n+\ell}{\ell}){N^2}/{(\frac{K+1}{2})}^{\frac{2\ell}{n}})$ & $\ell\beta_1+\beta_2(2\ell N)$
   \\
    \hline
\end{tabular}}
\vspace{-1.2em}
\end{table*}
\end{appendices}

\end{document}